\documentclass[%
reprint,
superscriptaddress,
 amsmath,amssymb,
 aps,
prb,
]{revtex4-2}

\allowdisplaybreaks 

\usepackage{graphicx}
\usepackage{dcolumn}
\usepackage[version=3]{mhchem} 
\usepackage{bm}
\usepackage{amssymb}
\usepackage{mathtools}
\usepackage[colorlinks=true,bookmarks=false,citecolor=blue,urlcolor=black,linkcolor=red]{hyperref}
\usepackage{booktabs}
\usepackage{setspace}
\usepackage{comment}

\usepackage{enumitem}

\usepackage{algorithm}
\usepackage{algpseudocode}

\usepackage{amsthm}

\newtheorem*{theorem*}{Theorem}
\newtheorem*{lemma*}{Lemma}

\newtheorem*{corollary*}{Corollary}
\newtheorem{proposition}{Proposition}

\theoremstyle{definition}

\theoremstyle{remark}

\newtheorem*{example*}{Example}
\newtheorem*{alg*}{Algorithm}

\newcommand{\lamd}[1]{\bm{\lambda}^{\downarrow}(#1)}
\newcommand{\lamu}[1]{\bm{\lambda}^{\uparrow}(#1)}

\begin{abstract}
We establish a majorization-based theory for bounding observables of waves with varied coherence. For any measurement, exact bounds are attained by the maximal and minimal elements in the set of input coherence spectra. The set's supremum and infimum---which may lie outside the set---provide optimal universal bounds: any alternative spectrum yielding universal bounds produces weaker constraints. We present an algorithm to compute the supremum and infimum, and prove that they lie either at singular boundary points or strictly outside the set of coherence spectra.
\end{abstract}

\begin{document}

\title{Optimal Universal Bounds for Waves with Varied Coherence \\ Based on Supremum and Infimum Coherence Spectra}

\author{Shiyu Li}
\affiliation{Department of Electrical and Computer Engineering and Microelectronics Research Center, The University of Texas at Austin, Austin, Texas 78712, USA}

\author{Cheng Guo}
\email{chengguo@utexas.edu}
\affiliation{Department of Electrical and Computer Engineering and Microelectronics Research Center, The University of Texas at Austin, Austin, Texas 78712, USA}

\date{\today}

\maketitle

\section{Introduction}\label{sec:intro}

Coherence is a fundamental property of all waves~\cite{born1999book,goodman2000book,mandel1995book,neill2003book,wolf2007}, and its control enables broad applications~\cite{glauber1963a,mandel1965,perina1985,korotkova2022}. High-coherence sources underpin interferometric metrology~\cite{Hariharan1991,udem2002,abbott2016}, holography~\cite{schnars1994,genevet2015,javidi2021}, and LiDAR~\cite{pierrottet2008,poulton2017}; low-coherence sources enable optical coherence tomography~\cite{huang1991,fercher2003} and speckle-free imaging~\cite{goodman1976,redding2012}; partially coherent sources benefit lithography~\cite{lin1980} and optical neural networks~\cite{luo2019,jia2024}. In practice, wave coherence often varies due to inherent or environmental fluctuations~\cite{born1999book,goodman2000book,mandel1995book,neill2003book}, and robust system design must accommodate such variation. This motivates a fundamental question: how does varied coherence affect achievable physical responses?

A common physical intuition is that lower coherence restricts the achievable range of responses. In Young's double-slit experiment, for example, less coherent sources yield reduced interference contrast~\cite{zernike1938}. This intuition suggests a natural strategy for bounding the effects of coherence variation: given a set of waves with different coherence, identify the most and least coherent elements, which should yield the maximal and minimal ranges of responses for any measurement.

\begin{figure}[tb]
\centering
\includegraphics[width=0.42\textwidth]{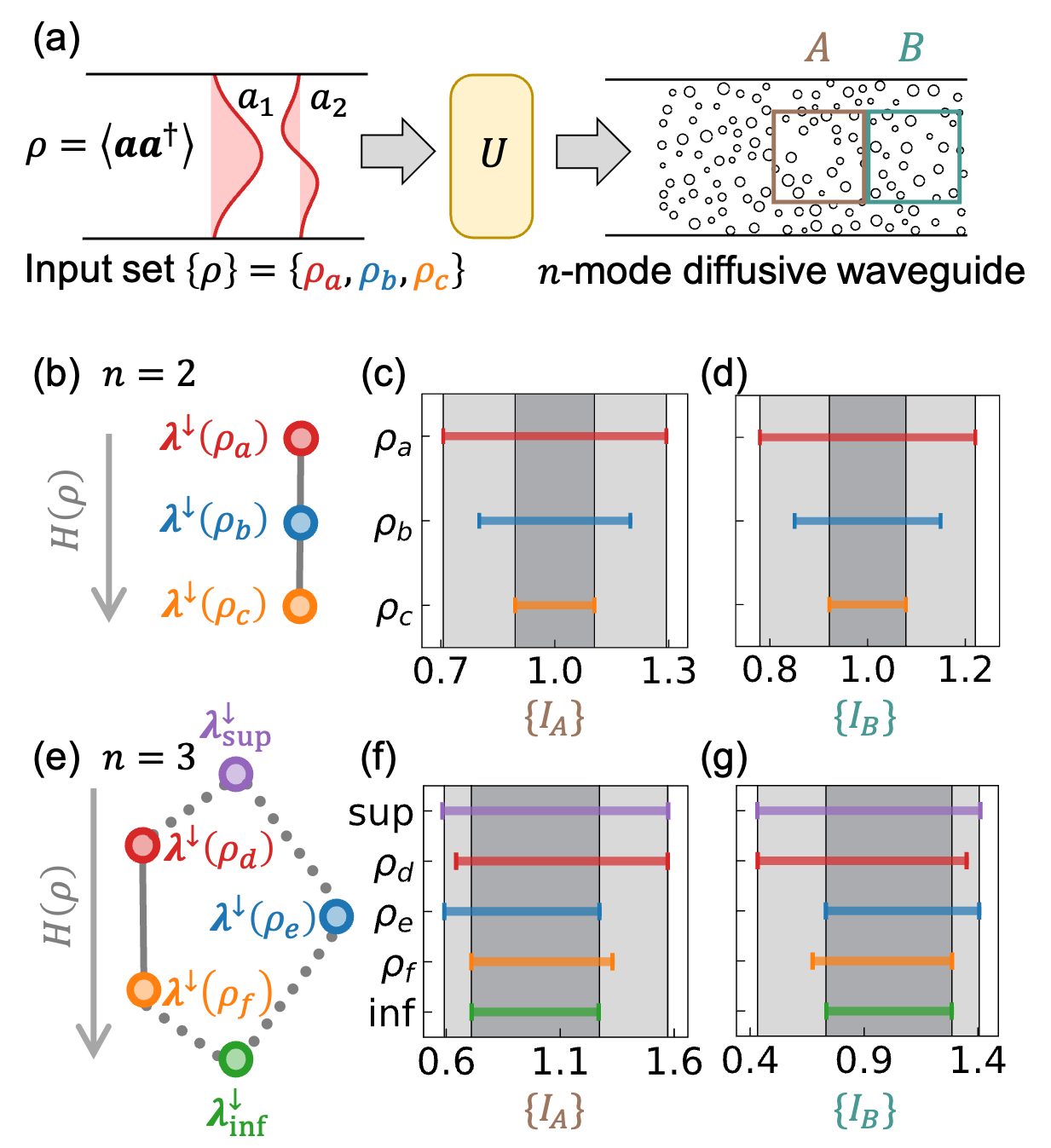}
\caption{Transport of waves with varied coherence. (a)~Partially coherent waves $\{\rho\}$, subject to unitary control $U$, are injected into an $n$-mode diffusive waveguide (illustrated for the case $n=2$). Intensities $I_A$ and $I_B$ are measured in regions $A$ and $B$. (b--d)~Results for $n=2$ with $\{\rho\} = \{\rho_a, \rho_b, \rho_c\}$. (b)~Hasse diagram: edges indicate majorization, and higher positions correspond to lower entropy $H(\rho)$. (c,d)~Intensity ranges satisfy $\{I\}_c \subseteq \{I\}_b \subseteq \{I\}_a$; light- and dark-shaded regions indicate the outer and inner bounds over $\{\rho\}$, attained by the maximum ($\rho_a$) and minimum ($\rho_c$) elements, respectively. (e--g)~Results for $n=3$ with $\{\rho\} = \{\rho_d, \rho_e, \rho_f\}$. (e)~Hasse diagram: the majorization order becomes partial. (f,g)~Intensity ranges no longer nest. The supremum $\bm{\lambda}^{\downarrow}_{\sup}$ and infimum $\bm{\lambda}^{\downarrow}_{\inf}$, lying outside the input set, provide optimal universal outer and inner bounds for all measurements.}
\label{fig:Fig1}
\end{figure}

While this strategy is intuitive, an immediate question arises: how should one compare wave coherence? The standard approach uses a scalar measure~\cite{karczewski1963a,tervo2003,refregier2005,setala2006} such as entropy~\cite{laue1907,gamo1964}. This works for two-mode waves but fails for waves with $n \geq 3$ modes. We illustrate this failure with an example in Fig.~\ref{fig:Fig1}(a) (simulation details in~Appendix~\ref{sec:simulation}). Consider an $n$-mode diffusive waveguide~\cite{cao2022a} into which we inject partially coherent waves and measure the intensities $I_A$ and $I_B$ in two target regions $A$ and $B$~\cite{shaughnessy2024multiregion}. The input waves have varied coherence, characterized by a set of density matrices $\{\rho\}$. For each $\rho$, we apply unitary control~\cite{guo2023b,guo2024,guo2024absorp,guo2024trans}: $\rho \to U\rho U^{\dagger}$, $U \in U(n)$, to generate all waves with the same power and coherence as $\rho$. We then determine the achievable ranges $\{I_A\}$ and $\{I_B\}$ for each $\rho$, along with the inner and outer bounds as $\rho$ varies over $\{\rho\}$.

Figures~\ref{fig:Fig1}(b--d) show results for an $n = 2$-mode waveguide with inputs $\{\rho\} = \{\rho_a, \rho_b, \rho_c\}$ (eigenvalues in~Appendix~\ref{sec:numerical_values}). Using the definition of entropy~\cite{laue1907,gamo1964}
\begin{equation}
H(\rho) = - \operatorname{tr} (\rho \ln \rho),
\end{equation}
we find $H(\rho_a) < H(\rho_b) < H(\rho_c)$ [Fig.~\ref{fig:Fig1}(b)]. The intensity ranges satisfy $\{I\}_c \subseteq \{I\}_b \subseteq \{I\}_a$ for both $I_A$ and $I_B$ measurements, and the inner and outer bounds over $\{\rho\}$ (dark and light shades) are attained by $\rho_c$ and $\rho_a$, respectively. The entropy-based strategy succeeds.

In contrast, Figs.~\ref{fig:Fig1}(e--g) show results for an $n = 3$-mode waveguide with inputs $\{\rho\} = \{\rho_d, \rho_e, \rho_f\}$ (eigenvalues in~Appendix~\ref{sec:numerical_values}). Despite $H(\rho_d) < H(\rho_e) < H(\rho_f)$ [Fig.~\ref{fig:Fig1}(e)], the intensity ranges no longer nest: $\{I\}_f \nsubseteq \{I\}_e \nsubseteq \{I\}_d$ for both $I_A$ and $I_B$, and the bounds are not attained by $\rho_f$ and $\rho_d$. The entropy-based strategy fails.

This example reveals a fundamental gap in our understanding: scalar measures such as entropy fail to capture the coherence-observable relationship for multidimensional waves. This failure motivates a different approach. Recent work has introduced a vector order based on majorization to compare wave coherence~\cite{nielsen1999,gour2015,luis2016,bengtsson2017,gour2018}. This order has been shown to play a fundamental role in determining how coherence affects observables~\cite{guo2025transport}. Here we apply the majorization order to address the challenge of bounding observables under coherence variation.

In this paper, we establish a majorization-based theory for bounding observables of waves with varied coherence. The distinction between the $n = 2$ and $n \geq 3$ cases arises from the partial-order nature of majorization: for a set $\Lambda$ of coherence spectra, maximal and minimal elements are generally non-unique, precluding a single most or least coherent element within $\Lambda$. Nevertheless, thanks to the complete-lattice property of the majorization order, there exist unique supremum and infimum spectra---the least upper and greatest lower bounds of $\Lambda$---which may lie outside $\Lambda$. We prove that these spectra provide optimal universal bounds for all measurements. We develop an algorithm to compute the supremum and infimum for any compact set $\Lambda$, and classify their geometric locations based on the smoothness of the boundary. Our results provide a theoretical foundation for understanding how coherence variation constrains wave phenomena.

\section{Physical problem}\label{sec:problem}

Consider an $n$-dimensional wave represented by an $n \times n$ density matrix $\rho$~\cite{goodman2000book,mandel1995book,wolf1985,yamazoe2012,okoro2017,saleh2025book}. The matrix $\rho$ is positive semidefinite, Hermitian, and normalized to unit trace: $\operatorname{tr} \rho = 1$~\cite{saleh2025book}. The eigenvalues of $\rho$, termed the coherence spectrum, are given by
\begin{equation}
    \lamd{\rho} = (\lambda_1^{\downarrow}(\rho), \ldots, \lambda_n^{\downarrow}(\rho)), \label{eq:coherence_spectrum}
\end{equation}
where $\downarrow$ denotes ordering the components in non-increasing order. The spectrum $\lamd{\rho}$ encodes the coherence properties of the wave. All feasible $\lamd{\rho}$ constitute the set of ordered $n$-dimensional probability vectors:
\begin{equation}
    \Delta^{\downarrow}_n = \left\{\bm{x} \in \mathbb{R}^n \,\big|\, x_i \geq 0, \, x_i \geq x_{i+1}, \, \sum_{i=1}^{n} x_i = 1\right\}. \label{eq:probability_simplex}
\end{equation}

The set of all waves with the same power and coherence spectrum as $\rho$ can be realized through unitary control~\cite{guo2023b,guo2024,guo2024absorp,guo2024trans}:
\begin{equation}
    \{\rho\} \coloneqq \{U \rho U^{\dagger} \mid U \in U(n)\}, \label{eq:unitary_orbit}
\end{equation}
which can be experimentally implemented using spatial light modulators~\cite{vellekoop2007,popoff2014,yu2017}, Mach-Zehnder interferometers~\cite{reck1994,miller2013,miller2013self,miller2013self-config,carolan2015,miller2015,clements2016,ribeiro2016,wilkes2016,annoni2017,perez2017,miller2017,harris2018,pai2019}, and multiplane light conversion systems~\cite{morizur2010,labroille2014,kupianskyi2023,taguchi2023,zhang2023}.
For an input wave $\rho$, a measurement characterized by a Hermitian operator $O$ yields an observable value:
\begin{equation}
    o[\rho] = \operatorname{tr} \rho O. \label{eq:measurement}
\end{equation}
The range of achievable values for the whole set of inputs $\{\rho\}$ is a closed interval~\cite{guo2024absorp,guo2024trans,guo2025transport}:
\begin{equation}
    \{o\} = \left[\lamd{\rho} \cdot \lamu{O}, \, \lamd{\rho} \cdot \lamd{O}\right], \label{eq:k_range}
\end{equation}
where $\cdot$ denotes the inner product.

Here we consider waves with varied coherence whose coherence spectra take values from a feasible set
\begin{equation}
    \Lambda \subseteq \Delta^{\downarrow}_n. \label{eq:coherence_set}
\end{equation}
Our aim is to determine two bounds:

(1) The \textit{outer bounds}: the union of achievable ranges across all coherence spectra
\begin{equation}\label{eq:def_outer_bounds}
    \bigcup_{k \in \Lambda} \{o\}_k = \left[\lamd{\rho_{ol}} \cdot \lamu{O}, \, \lamd{\rho_{ou}} \cdot \lamd{O}\right],
\end{equation}
as illustrated by the light-shaded regions in Figs.~\ref{fig:Fig1}(c,d,f,g). Here $\lamd{\rho_{ol}}$ and $\lamd{\rho_{ou}}$ denote coherence spectra that attain the lower and upper outer bounds, respectively. Physically, the outer bounds represent the range of all possible observable values across all input coherence spectra.

(2) The \textit{inner bounds}: the intersection of achievable ranges across all coherence spectra
\begin{equation}\label{eq:def_inner_bounds}
    \bigcap_{k \in \Lambda} \{o\}_k = \left[\lamd{\rho_{il}} \cdot \lamu{O}, \, \lamd{\rho_{iu}} \cdot \lamd{O}\right],
\end{equation}
as illustrated by the dark-shaded regions in Figs.~\ref{fig:Fig1}(c,d,f,g). Here $\lamd{\rho_{il}}$ and $\lamd{\rho_{iu}}$ denote coherence spectra that attain the lower and upper inner bounds, respectively. Physically, the inner bounds represent the guaranteed range of observable values that are attainable for any input coherence spectrum in $\Lambda$.

Given $O$ and $\Lambda$, in principle one can calculate the outer and inner bounds using Eqs.~(\ref{eq:k_range}), (\ref{eq:def_outer_bounds}), and (\ref{eq:def_inner_bounds}). However, such direct calculation faces challenges. First, it requires optimizing for $\lamd{\rho_{ol}}$, $\lamd{\rho_{ou}}$, $\lamd{\rho_{il}}$, and $\lamd{\rho_{iu}}$ over $\Lambda$, which becomes intractable when $\Lambda$ is large or infinite. Second, the optimized spectra depend on $O$, requiring the optimization to be repeated for each different measurement. These difficulties motivate the search for simpler bounds that are readily calculated for any measurement.

\section{Majorization}\label{sec:majorization}
To address this challenge, we use a coherence order based on majorization. For vectors $\bm{x}$ and $\bm{y}$ in $\Delta_n^\downarrow$, $\bm{x}$ is majorized by $\bm{y}$, denoted as $\bm{x} \prec \bm{y}$~\cite{marshall2011}, if 
\begin{align}
    \sum_{i=1}^k x_i^{\downarrow} \leq \sum_{i=1}^k y_i^{\downarrow}, \quad \text{for all } k = 1,2,\dots,n-1. \label{eq:majorization_1} 
\end{align}
The majorization relation $\prec$ defines an order on $\Delta_n^\downarrow$~\cite{marshall2011}. If $\bm{\lambda}^\downarrow(\rho_1) \prec \bm{\lambda}^\downarrow(\rho_2)$, we say that $\rho_1$ is no more coherent than $\rho_2$ in the majorization order~\cite{guo2025transport}. Unlike the entropy order, which is a total order, the majorization order is a partial order when $n\geq 3$~\cite{davey2002book,cicalese2002}. This means that $\bm{x}$ and $\bm{y}$ in $\Delta_n^\downarrow$ can be incomparable, denoted as $\bm{x} \| \bm{y}$, when neither $\bm{x} \prec \bm{y}$ nor $\bm{y} \prec \bm{x}$ holds~\cite{davey2002book}. Incomparability is typical rather than exceptional~\cite{cunden2021,jain2024,harling2024a}. 

Within a subset $\Lambda \subseteq \Delta_n^\downarrow$, we define four types of special elements~\cite{schroder2003,boyd2004,roman2008}. A maximal element is not majorized by any other element of $\Lambda$. A minimal element does not majorize any other element. A maximum element majorizes every element. A minimum element is majorized by every element. The sets of all maximal and minimal elements are denoted by $\Lambda_{\max}$ and $\Lambda_{\min}$, respectively. If the maximum (minimum) of $\Lambda$ exists, it is unique and $\Lambda_{\max}$ ($\Lambda_{\min}$) becomes a singleton~\cite{davey2002book,boyd2004}.

The majorization order plays a fundamental role in the coherence theory of observables. It is proved that~\cite{guo2025transport}
\begin{equation}\label{eq:majorization_order_preserved}
    \bm{\lambda}^\downarrow(\rho_1) \prec \bm{\lambda}^\downarrow(\rho_2) \iff \forall O\in H_n, \; \{o\}_1 \subseteq \{o\}_2, 
\end{equation}
where $H_n$ denotes the set of all $n\times n$ Hermitian matrices. Eq.~(\ref{eq:majorization_order_preserved}) indicates that the majorization order is preserved and reflected in the range of any measurement.

We use the majorization order to explain the results in Fig.~\ref{fig:Fig1}. Figures~\ref{fig:Fig1}(b) and~\ref{fig:Fig1}(e) show the Hasse diagrams~\cite{davey2002book} for the majorization order for the $n=2$ and $n=3$ cases. When $n=2$, the majorization order is total:
\begin{equation}
\lamd{\rho_c} \prec \lamd{\rho_b} \prec \lamd{\rho_a}.
\end{equation}
Thus, $\lamd{\rho_a}$ and $\lamd{\rho_c}$ are the unique maximum and minimum, respectively. They attain the outer and inner bounds for any measurement. This explains
\begin{equation}
    \{I\}_c \subseteq \{I\}_b \subseteq \{I\}_a
\end{equation}
for both $I_A$ and $I_B$ measurements [Figs.~\ref{fig:Fig1}(c) and~\ref{fig:Fig1}(d)].

When $n=3$, the majorization order becomes partial:
\begin{equation}
\lamd{\rho_f} \prec \lamd{\rho_d}, \quad  \lamd{\rho_f} \parallel \lamd{\rho_e}, \quad \lamd{\rho_d} \parallel \lamd{\rho_e}.   
\end{equation}
Thus, the maximal and minimal elements are not unique:
\begin{align}
\Lambda_{\max} &= \{\lamd{\rho_d}, \lamd{\rho_e}\},\\ \quad \Lambda_{\min} &= \{\lamd{\rho_f}, \lamd{\rho_e}\}.    
\end{align}
For the $I_A$ measurement, we have 
\begin{align}
\lamd{\rho_{ol}}= \lamd{\rho_e}, \quad \lamd{\rho_{ou}}= \lamd{\rho_d}, \\
\lamd{\rho_{il}}= \lamd{\rho_f}, \quad 
\lamd{\rho_{iu}}=\lamd{\rho_e}.
\end{align}
For the $I_B$ measurement, we have 
\begin{align}
\lamd{\rho_{ol}}= \lamd{\rho_d}, \quad \lamd{\rho_{ou}}= \lamd{\rho_e}, \\
\lamd{\rho_{il}}= \lamd{\rho_e}, \quad 
\lamd{\rho_{iu}}=\lamd{\rho_f}.
\end{align}
We observe that the outer (inner) bounds are attained by maximal (minimal) elements, but the precise choices depend on the measurement. This non-uniqueness of maximal and minimal elements, arising from incomparability in the majorization order, precludes a single fixed element in $\Lambda$ from establishing outer or inner bounds for all measurements universally. Such phenomena cannot be explained by total orders such as the entropy order.

\begin{figure}[tb]
\centering
\includegraphics[width=0.43\textwidth]{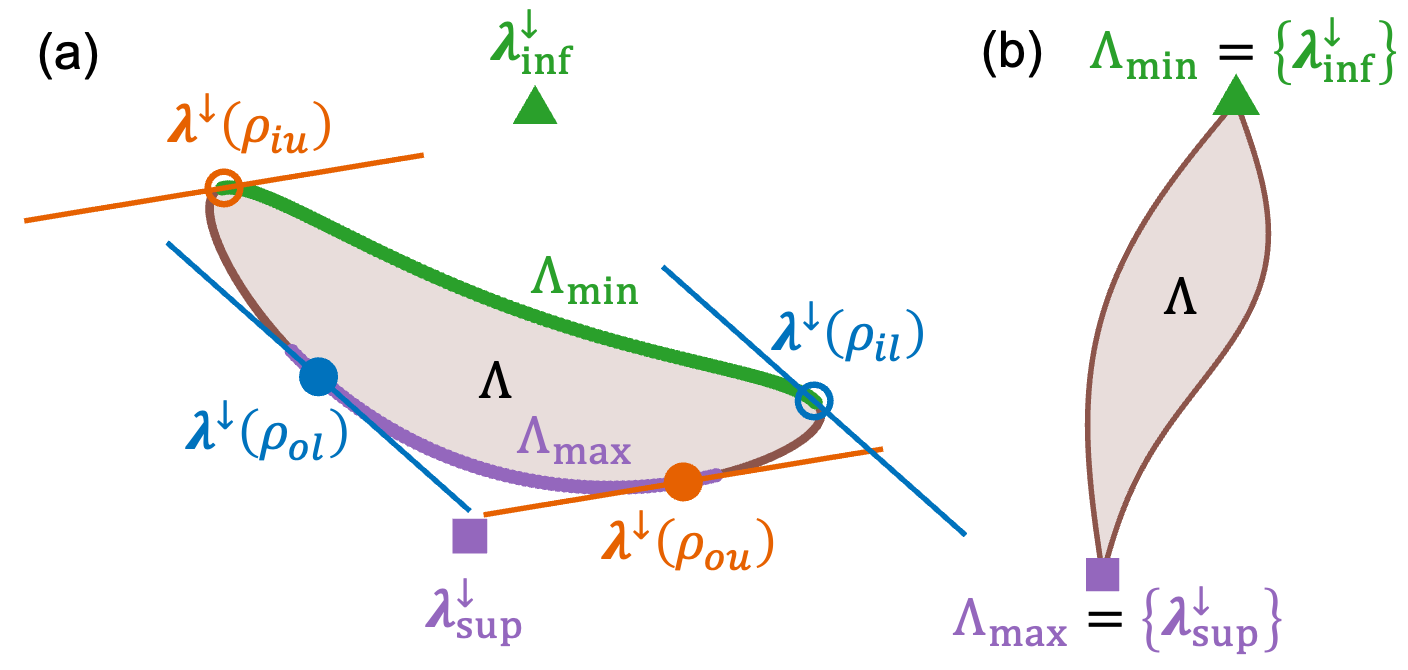}
\caption{Schematic of a compact set $\Lambda$ of coherence spectra. (a)~Generic case: $\Lambda$ has sets of maximal and minimal elements $\Lambda_{\max}$ and $\Lambda_{\min}$. Outer bounds are attained by elements of $\Lambda_{\max}$, and inner bounds by elements of $\Lambda_{\min}$; the specific elements depend on the measurement $O$. The supremum $\bm{\lambda}^{\downarrow}_{\sup}$ and infimum $\bm{\lambda}^{\downarrow}_{\inf}$ lie outside $\Lambda$. (b)~Special case: $\bm{\lambda}^{\downarrow}_{\sup}$ and $\bm{\lambda}^{\downarrow}_{\inf}$ coincide with the unique maximum and minimum of $\Lambda$.}
\label{fig:Fig2}
\end{figure}

We generalize the results above to any compact set $\Lambda$ (finite or closed infinite). We prove that for any $\bm{x} \in \Lambda$, there exist $\bm{x}_m \in \Lambda_{\min}$ and $\bm{x}_M \in \Lambda_{\max}$ such that $\bm{x}_m \prec \bm{x} \prec \bm{x}_M$. Consequently, to obtain the outer bounds over $\Lambda$, it suffices to optimize over $\Lambda_{\max}$: 
\begin{equation}\label{eq:restrict_outer_bounds}
\forall O\in H_n, \quad    \bigcup_{k \in \Lambda} \{o\}_k = \bigcup_{k \in \Lambda_{\max}} \{o\}_k.
\end{equation}
For the inner bounds, it suffices to optimize over $\Lambda_{\min}$:
\begin{equation}\label{eq:restrict_inner_bounds}
\forall O\in H_n, \quad     \bigcap_{k \in \Lambda} \{o\}_k = \bigcap_{k \in \Lambda_{\min}} \{o\}_k.
\end{equation}
See~Appendix~\ref{sec:proof_maxmin} for proofs of Eqs.~(\ref{eq:restrict_outer_bounds}) and~(\ref{eq:restrict_inner_bounds}). These results admit a geometric interpretation when $\Lambda$ is a closed infinite set [Fig.~\ref{fig:Fig2}(a)]. $\lamd{\rho_{ou}}$ and $\lamd{\rho_{iu}}$ maximize and minimize $\lamd{O} \cdot \bm{x}$ over all $\bm{x} \in \Lambda$; they correspond to the tangent points where hyperplanes orthogonal to $\lamd{O}$ (orange lines) touch the boundary of $\Lambda$. Similarly, $\lamd{\rho_{il}}$ and $\lamd{\rho_{ol}}$ maximize and minimize $\lamu{O} \cdot \bm{x}$ over all $\bm{x} \in \Lambda$; they are the tangent points for hyperplanes orthogonal to $\lamu{O}$ (blue lines). These tangent points lie within restricted regions on the boundary of $\Lambda$: $\lamd{\rho_{ou}}$ and $\lamd{\rho_{ol}}$ lie within $\Lambda_{\max}$, while $\lamd{\rho_{iu}}$ and $\lamd{\rho_{il}}$ lie within $\Lambda_{\min}$. As the measurement $O$ varies, the tangent points slide across their respective regions. Only when $\Lambda_{\max}$ (or $\Lambda_{\min}$) is a singleton does it establish universal outer (inner) bounds for all measurements [Fig.~\ref{fig:Fig2}(b)].

\section{Supremum and infimum}\label{sec:sup_inf}
Although we have simplified the problem, we still need to optimize over $\Lambda_{\max}$ and $\Lambda_{\min}$, and the optimal spectra vary with the measurement. Ideally, a single fixed element in $\Lambda$ would establish outer (or inner) bounds universally for all measurements. This occurs when $\Lambda$ contains the maximum (or minimum) element, but a generic $\Lambda$ does not. To overcome this limitation, we relax the search domain from $\Lambda$ to $\Delta_n^\downarrow$: we seek a single fixed element in $\Delta_n^\downarrow$ that establishes \emph{optimal} outer (inner) bounds \emph{universally} for all measurements.

Such elements exist for both outer and inner bounds. The partially ordered set $(\Delta^{\downarrow}_n, \prec)$ forms a \emph{complete lattice}, which guarantees that any subset $\Lambda \subseteq \Delta_n^\downarrow$ has a unique supremum $\bm{\lambda}^{\downarrow}_{\sup}$ and infimum $\bm{\lambda}^{\downarrow}_{\inf}$ within $\Delta^{\downarrow}_n$~\cite{alberti1982,bapat1991,bosyk2019optimal} [Fig.~\ref{fig:Fig2}(a)]. These extrema are defined by~\cite{davey2002book,roman2008}  
\begin{align}
\Lambda \prec \lamd{\rho} &\iff \bm{\lambda}^{\downarrow}_{\sup} \prec \lamd{\rho}, \label{eq:def_sup}\\
\lamd{\rho} \prec \Lambda &\iff \lamd{\rho} \prec \bm{\lambda}^{\downarrow}_{\inf}, \label{eq:def_inf} 
\end{align}
where $\Lambda \prec \lamd{\rho}$ means that $\lamd{\rho}$ majorizes all elements of $\Lambda$. Thus, $\bm{\lambda}^{\downarrow}_{\sup}$ represents the least coherent spectrum that majorizes all elements of $\Lambda$, while $\bm{\lambda}^{\downarrow}_{\inf}$ represents the most coherent spectrum majorized by all elements of $\Lambda$. 

For a measurement $O$, we define the supremum and infimum ranges as the closed intervals
\begin{align}
    \{o\}_{\sup} &\coloneqq \bigl[\bm{\lambda}^{\downarrow}_{\sup} \cdot \lamu{O}, \bm{\lambda}^{\downarrow}_{\sup} \cdot \lamd{O} \bigr], \label{eq:sup_range} \\
    \{o\}_{\inf} &\coloneqq \bigl[\bm{\lambda}^{\downarrow}_{\inf} \cdot \lamu{O}, \bm{\lambda}^{\downarrow}_{\inf} \cdot \lamd{O} \bigr]. \label{eq:inf_range} 
\end{align}
We prove that they provide universal bounds:
\begin{equation}
\forall O\in H_n, \; \; \{ o \}_{\inf}\subseteq \bigcap_{k \in \Lambda}\{o\}_k \subseteq \bigcup_{k \in \Lambda}\{o\}_k \subseteq \{ o \}_{\sup}.
    \label{eq:set_relation}     
\end{equation}
\begin{proof}
The definition of $\bm{\lambda}^{\downarrow}_{\sup}$ [Eq.~(\ref{eq:def_sup})] and Eq.~(\ref{eq:majorization_order_preserved}) imply that $\forall k \in \Lambda$, $\{o\}_{k} \subseteq  \{o\}_{\sup}$, and thus $\bigcup_{k \in \Lambda} \{o\}_{k} \subseteq \{o\}_{\sup}$. Similarly, the definition of $\bm{\lambda}^{\downarrow}_{\inf}$ [Eq.~(\ref{eq:def_inf})] and Eq.~(\ref{eq:majorization_order_preserved}) imply that $\forall k \in \Lambda$, $\{o\}_{\inf} \subseteq \{o\}_{k}$, and thus $\{o\}_{\inf} \subseteq \bigcap_{k \in \Lambda} \{o\}_{k}$. The inclusion $\bigcap_{k \in \Lambda} \{o\}_{k} \subseteq \bigcup_{k \in \Lambda} \{o\}_{k}$ holds trivially. This completes the proof of Eq.~(\ref{eq:set_relation}).
\end{proof}

Moreover, these bounds are optimal. Any spectrum $\bm{\lambda}^\downarrow_u$ providing a universal outer bound must satisfy $\bm{\lambda}^{\downarrow}_{\sup} \prec \bm{\lambda}^\downarrow_u$, which implies
\begin{equation}
\forall O\in H_n, \quad \{o\}_{\sup}\subseteq\{o\}_u.\label{eq:optimality_sup}
\end{equation}
Similarly, any spectrum $\bm{\lambda}^\downarrow_l$ providing a universal inner bound must satisfy $\bm{\lambda}^\downarrow_l \prec \bm{\lambda}^{\downarrow}_{\inf}$, which implies
\begin{equation}
\forall O\in H_n, \quad \{o\}_l\subseteq\{o\}_{\inf}.\label{eq:optimality_inf}
\end{equation}
Thus, $\bm{\lambda}^{\downarrow}_{\sup}$ and $\bm{\lambda}^{\downarrow}_{\inf}$ provide the tightest universal outer and inner bounds constructed from a single spectrum.
\begin{proof}
We first prove Eq.~(\ref{eq:optimality_sup}). By premise, $\bm{\lambda}^{\downarrow}_u$ satisfies 
$\forall O\in H_n,\, \bigcup_{k \in \Lambda} \{ o \}_k \subseteq \{ o \}_{u}$.
From Eq.~(\ref{eq:majorization_order_preserved}), we obtain $\Lambda \prec \bm{\lambda}^\downarrow_u$. Then Eq.~(\ref{eq:def_sup}) implies $\bm{\lambda}^\downarrow_{\sup} \prec \bm{\lambda}^\downarrow_u$. Using Eq.~(\ref{eq:majorization_order_preserved}) again, we obtain $\forall O\in H_n,\, \{o\}_{\sup}\subseteq\{o\}_u.$ This completes the proof of Eq.~(\ref{eq:optimality_sup}). 

The proof of Eq.~(\ref{eq:optimality_inf}) is similar: $\bm{\lambda}^{\downarrow}_l$ satisfies $\forall O\in H_n$, $\{ o \}_l \subseteq \bigcap_{k \in \Lambda} \{ o \}_k$, so $\bm{\lambda}^\downarrow_l \prec \Lambda$ [Eq.~(\ref{eq:majorization_order_preserved})]. Consequently, $\bm{\lambda}^\downarrow_l \prec \bm{\lambda}^\downarrow_{\inf}$ [Eq.~(\ref{eq:def_inf})], which implies $\{o\}_{l} \subseteq \{o\}_{\inf}$ [Eq.~(\ref{eq:majorization_order_preserved})].
This completes the proof of Eq.~(\ref{eq:optimality_inf}).
\end{proof}

\section{Algorithm}\label{sec:algorithm}
Having established the significance of $\bm{\lambda}^{\downarrow}_{\sup}$ and $\bm{\lambda}^{\downarrow}_{\inf}$, we now address their efficient computation.

For a finite set $\Lambda$, we compute $\bm{\lambda}^{\downarrow}_{\sup}$ and $\bm{\lambda}^{\downarrow}_{\inf}$ using the algorithm in Ref.~\cite{bosyk2019optimal}. For the set $\{\lamd{\rho_d}, \lamd{\rho_e}, \lamd{\rho_f}\}$ in Fig.~\ref{fig:Fig1}(e), we obtain 
\begin{equation}
\bm{\lambda}^{\downarrow}_{\sup} = (0.80, 0.20, 0.00), \quad \bm{\lambda}^{\downarrow}_{\inf} = (0.55, 0.35, 0.10).    
\end{equation}
Figs.~\ref{fig:Fig1}(f,g) confirm that for both $I_A$ and $I_B$,
\begin{equation}
    \{ I \}_{\inf}\subseteq \{I\}_k  \subseteq \{ I \}_{\sup}, \quad k \in \{d,e,f\}.
\end{equation}
The supremum and infimum bounds (purple and green intervals) closely match the exact outer and inner bounds (light and dark shades):
\begin{align}
\{I_A\}_{\sup} &= [0.58, 1.57],  &\{I_A\}_{\text{outer}} = [0.59, 1.57]; \\
\{I_A\}_{\inf} &= [0.71, 1.27],  &\{I_A\}_{\text{inner}} = [0.71, 1.27].  \\
\{I_B\}_{\sup} &= [0.43, 1.41],  &\{I_B\}_{\text{outer}} = [0.43, 1.40]; \\
\{I_B\}_{\inf} &= [0.73, 1.29],  &\{I_B\}_{\text{inner}} = [0.73, 1.29].  
\end{align}

\begin{figure}[tb]
\centering
\includegraphics[width=0.42\textwidth]{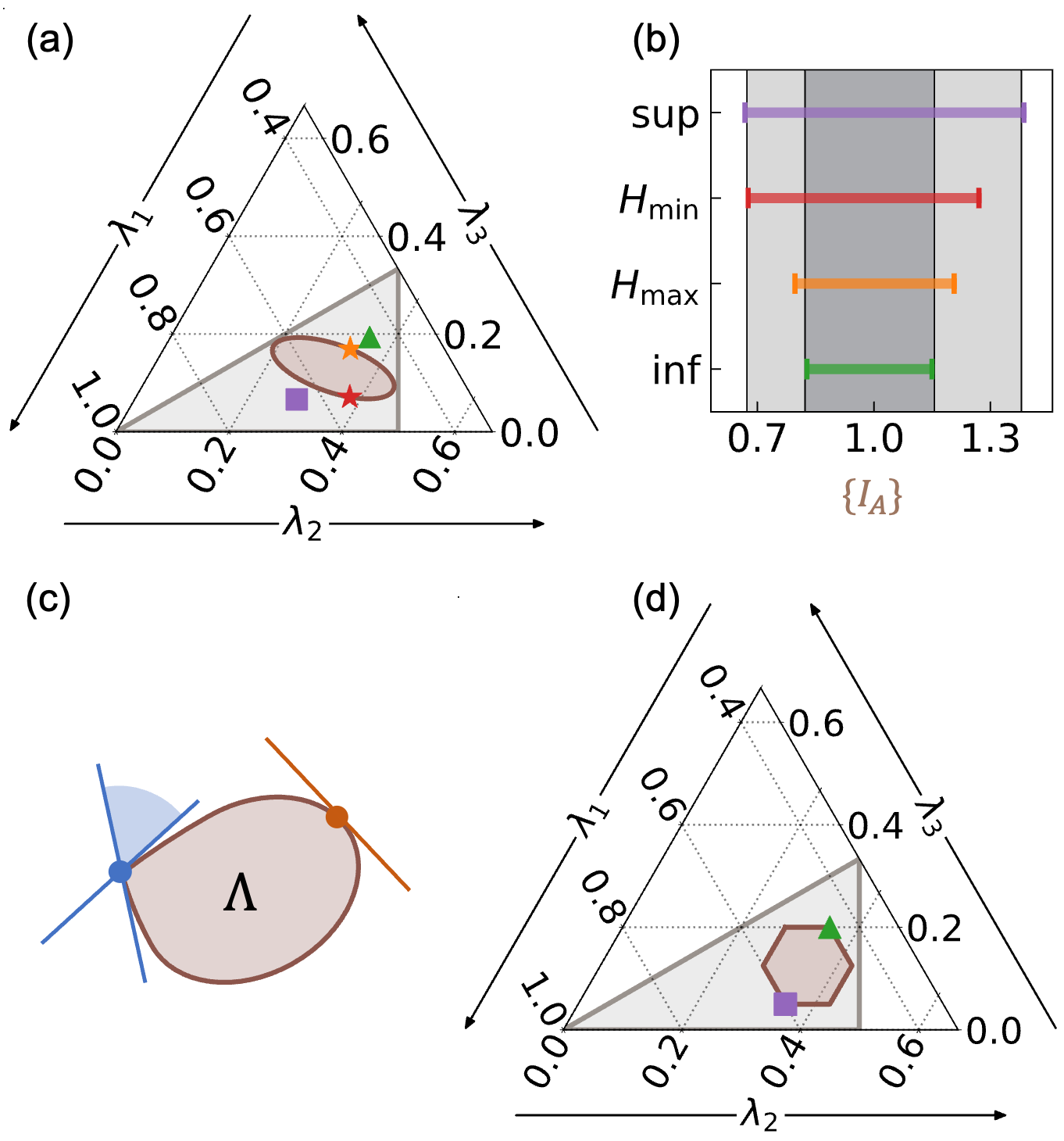}
\caption{%
Supremum and infimum of a compact set $\Lambda \subseteq \Delta_n^\downarrow$. 
(a)~Ternary plot of an elliptical disk $\Lambda$ (brown) in $\Delta^\downarrow_3$ (gray). Purple square and green triangle mark the supremum and infimum; red and orange stars mark the entropy minimum and maximum. 
(b)~Achievable range $\{I_A\}$ for the $n=3$ waveguide in Fig.~\ref{fig:Fig1}(a). The supremum and infimum bounds closely match the exact bounds (light and dark shades), whereas the entropy extrema do not. Results for $\{I_B\}$ are similar (Fig.~\ref{fig:Fig8}). 
(c)~Smooth (orange) and singular (blue) boundary points of a convex set. The supremum and infimum can lie only at singular points or outside $\Lambda$. 
(d)~A hexagonal $\Lambda$; the supremum and infimum lie at vertices (singular points).}
\label{fig:Fig3}
\end{figure}

For a closed infinite set $\Lambda$, we develop an algorithm to compute $\bm{\lambda}^{\downarrow}_{\sup}$ and $\bm{\lambda}^{\downarrow}_{\inf}$.
\begin{alg*}[Computing $\bm{\lambda}^{\downarrow}_{\sup}$ and $\bm{\lambda}^{\downarrow}_{\inf}$ of compact $\Lambda$] \hfill
\begin{enumerate}[leftmargin=*, label=(\arabic*)]\setlength{\itemsep}{-0.1pt}
    \item Calculate the convex hull of $\Lambda$.
    \item If the convex hull is a polytope: $\bm{\lambda}^{\downarrow}_{\sup}$ and $\bm{\lambda}^{\downarrow}_{\inf}$ of $\Lambda$ coincide with those of the polytope's vertices, which can be calculated using the algorithm in Ref.~\cite{bosyk2019optimal}.
    
    \item If the convex hull is not a polytope: 

    \begin{enumerate}[leftmargin=1em,label=(\alph*)]\setlength{\itemsep}{-0.1pt}
        \item Calculate inscribing ($\mathcal{I}$) and circumscribing ($\mathcal{C}$) polytopes of the convex hull with $N$ vertices.
        \item Calculate the supremum and infimum of $\mathcal{I}$ and $\mathcal{C}$: $\sup(\mathcal{I})$, $\sup(\mathcal{C})$, $\inf(\mathcal{I})$, and $\inf(\mathcal{C})$. Increase $N$ until $\| \sup(\mathcal{I}) - \sup(\mathcal{C}) \|_{\infty} < \epsilon$ and $\| \inf(\mathcal{I}) - \inf(\mathcal{C}) \|_{\infty} < \epsilon$ with a predetermined tolerance $\epsilon$.
        \item Once the process converges, $\bm{\lambda}^{\downarrow}_{\sup}$ and $\bm{\lambda}^{\downarrow}_{\inf}$ of $\Lambda$ are approximated by $\sup(\mathcal{C})$ and $\inf(\mathcal{C})$, respectively.
    \end{enumerate}
\end{enumerate}
\end{alg*}
\noindent See~Appendix~\ref{sec:inscribing_circumscribing} for the construction of inscribing and circumscribing polygons. See~Appendix~\ref{sec:convergence} for the convergence analysis, which follows a trend of $\mathcal{O}(N^{-2})$. We adopt $\epsilon = 0.01$ throughout the paper.

We demonstrate the algorithm using an elliptic-disk set $\Lambda \subseteq \Delta_3^\downarrow$ [Fig.~\ref{fig:Fig3}(a)].
The resulting $\bm{\lambda}^{\downarrow}_{\sup}$ and $\bm{\lambda}^{\downarrow}_{\inf}$ lie outside $\Lambda$ [Fig.~\ref{fig:Fig3}(b)]. Again, the supremum and infimum bounds closely match the exact outer and inner bounds:
\begin{align}
\{I_A\}_{\sup} &= [0.67, 1.39],  &\{I_A\}_{\text{outer}} = [0.67, 1.38]; \\
\{I_A\}_{\inf} &= [0.83, 1.15],  &\{I_A\}_{\text{inner}} = [0.82, 1.16].     
\end{align} In contrast, the coherence spectra with minimum ($H_{\min}$) and maximum ($H_{\max}$) entropy fail to bound either measurement. Results for $\{ I_B \}$ are similar (Fig.~\ref{fig:Fig8}).

Additional numerical demonstrations for supremum and infimum bounds are provided in~Appendix~\ref{sec:additional_ND}.

\section{Locations of \\ supremum and infimum}\label{sec:geometric_loc}
Finally, we characterize the geometric locations of $\bm{\lambda}^{\downarrow}_{\sup}$ and $\bm{\lambda}^{\downarrow}_{\inf}$.

For definiteness, we assume that $\Lambda$ is a compact convex set with nonempty interior. Every boundary point $\bm{x}$ of $\Lambda$ admits a supporting hyperplane~\cite{gruber2007}. If this hyperplane is unique, $\bm{x}$ is called a \emph{smooth} boundary point; otherwise, it is called a \emph{singular} boundary point [Fig.~\ref{fig:Fig3}(c)]. If all boundary points are smooth, $\Lambda$ itself is said to be smooth.

Figures~\ref{fig:Fig2} and~\ref{fig:Fig3}(a) suggest that the locations of $\bm{\lambda}^{\downarrow}_{\sup}$ and $\bm{\lambda}^{\downarrow}_{\inf}$ depend on the boundary smoothness. We prove this: $\bm{\lambda}^{\downarrow}_{\sup}$ and $\bm{\lambda}^{\downarrow}_{\inf}$ of $\Lambda$ lie either at singular boundary points [Fig.~\ref{fig:Fig3}(d)] or outside $\Lambda$. If $\Lambda$ is smooth, both lie outside $\Lambda$. See~Appendix~\ref{sec:proof_location} for a proof.

\section{Conclusion}
We have established a majorization-based theory for bounding observables of waves with varied coherence. We prove that the exact bounds on any given measurement are attained by the maximal and minimal elements of the input coherence set. The supremum and infimum of the set provide optimal universal bounds for all measurements---any alternative spectrum yielding universal bounds must produce weaker constraints. For closed infinite sets, we developed a convergent algorithm to compute these spectra and classified their geometric locations: they lie either at singular boundary points or strictly outside the set of input coherence spectra. Our results apply to arbitrary wave types and any linear measurements, revealing fundamental constraints that coherence variation imposes on achievable physical responses.

\section*{Acknowledgments}

C.G. is supported by the Jack Kilby/Texas Instruments Endowed Faculty Fellowship.

\bibliography{main}

\clearpage

\twocolumngrid
\appendix
\onecolumngrid

\section{Simulation of power delivery in a diffusive waveguide}
\label{sec:simulation}

In this section, we provide computational details for the power delivery simulations.

As illustrated in Fig.~\ref{fig:Fig4}, we consider two silicon waveguides (refractive index $n_i = 3.48$) embedded in air. The narrower waveguide has a width of $0.4\,\mu$m and supports $n = 2$ eigenmodes; the wider waveguide is $0.6\,\mu$m wide and supports $n = 3$ eigenmodes. We simulate light propagation at vacuum wavelength $\lambda_0 = 1.55\,\mu$m for the transverse-magnetic (TM) polarization, where the electric field is polarized along the $z$-direction.

To induce strong scattering, we introduce a $6\,\mu$m-long disordered region containing randomly distributed air holes. The hole radii are uniformly distributed between 20 and 50~nm, with a scatterer density of $20\,\mu\text{m}^{-2}$ and a minimum edge-to-edge separation of 40~nm.

For each waveguide, the input partially coherent wave is described by a density matrix $\rho = \sum_{i=1}^{n} \lambda_i |\psi_i\rangle\langle\psi_i|$, representing a statistical mixture of the $n$ guided eigenmodes $|\psi_i\rangle$ with weights $\lambda_i$. We then apply unitary control to generate any wave with the same power and coherence spectrum as $\rho$: $\rho [U] = U \rho U^{\dagger}, \, U \in U(n)$.  

We feed $\rho[U]$ into the waveguide and analyze energy delivery to two target regions, $A$ and $B$, located deep within the scattering medium. In the $n = 2$ geometry, regions $A$ and $B$ are $0.3\,\mu\text{m} \times 0.3\,\mu$m squares centered at depths of $5\,\mu$m and $5.3\,\mu$m, respectively. In the $n = 3$ geometry, the corresponding regions are $0.48\,\mu\text{m} \times 0.48\,\mu$m squares located at depths of $4.5\,\mu$m and $5\,\mu$m. The simulations are performed using the finite-difference frequency-domain (FDFD) method via the MESTI solver~\cite{MESTI,lin2022_APF}.

We compute the transmission matrices $t_A$ and $t_B$, which map the incident eigenmodes $|\psi_i\rangle$ to the internal fields within regions $A$ and $B$, respectively. These matrices have dimensions $M_A \times n$ and $M_B \times n$, where $M_{A/B}$ denotes the number of spatial grid points in each region. Following Ref.~\cite{shaughnessy2024multiregion}, we define the measurement operators $I_A = t_A^\dagger t_A$ and $I_B = t_B^\dagger t_B$, corresponding to the integrated intensity in each region. We normalize these operators by the average of their eigenvalues, representing the mean intensity under random illumination. The resulting eigenvalues $\lamd{I_A}$ and $\lamd{I_B}$ for both the $n = 2$ and $n = 3$ cases are listed in~Appendix~\ref{sec:numerical_values}.

\begin{figure*}[htbp]
\centering
\includegraphics[width=0.65\textwidth]{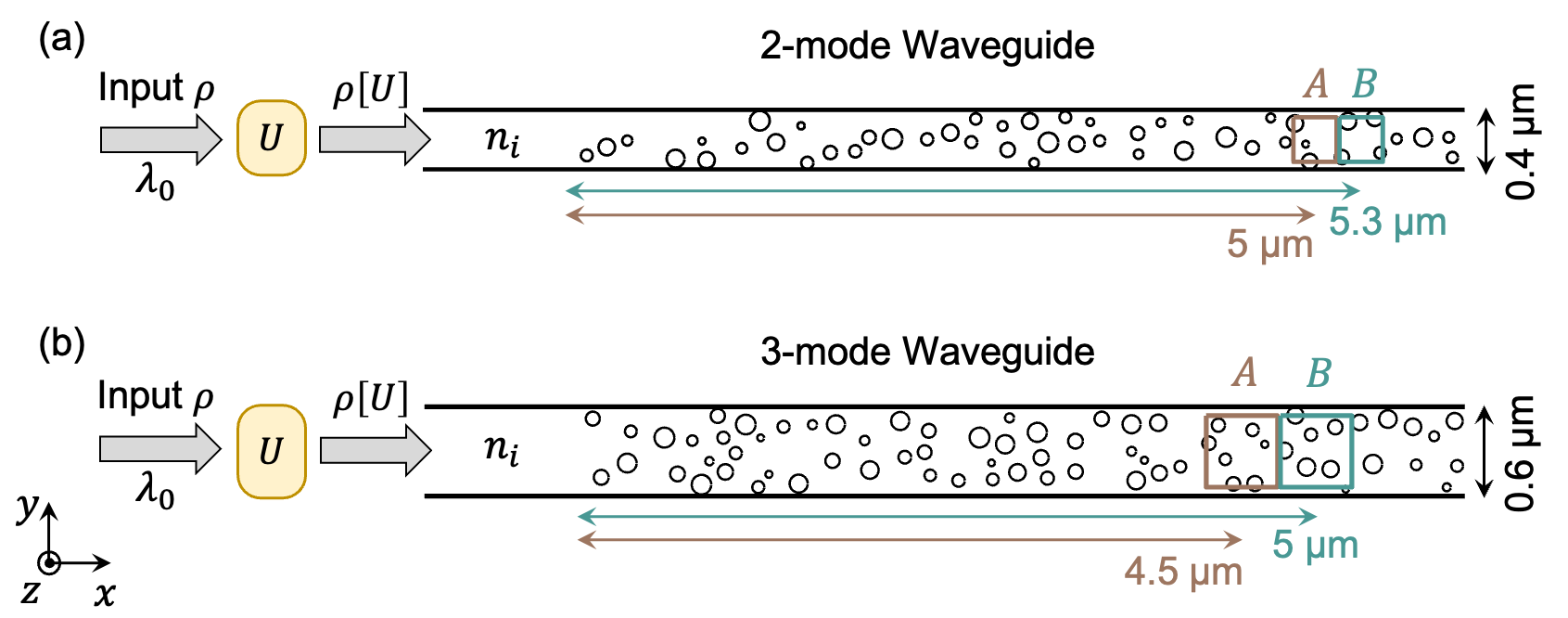}
\caption{Simulation setup for power delivery in diffusive waveguides. TM-polarized light at vacuum wavelength $\lambda_0 = 1.55\,\mu$m is injected from the left into silicon waveguides (refractive index $n_i = 3.48$, embedded in air) containing a $6\,\mu$m-long disordered region of randomly distributed air holes. (a) A $0.4\,\mu$m-wide waveguide supporting $n = 2$ eigenmodes. Target regions $A$ and $B$ (each $0.3\,\mu\text{m} \times 0.3\,\mu$m) are centered at depths of $5\,\mu$m and $5.3\,\mu$m, respectively. (b) A $0.6\,\mu$m-wide waveguide supporting $n = 3$ eigenmodes. Target regions $A$ and $B$ (each $0.48\,\mu\text{m} \times 0.48\,\mu$m) are centered at depths of $4.5\,\mu$m and $5\,\mu$m, respectively.}
\label{fig:Fig4}
\end{figure*}

\section{Data for Figs.~\ref{fig:Fig1} and~\ref{fig:Fig3}}\label{sec:numerical_values}

\subsection{Data for Fig.~\ref{fig:Fig1}}

For $n=2$, the eigenvalues of $I_A$ and $I_B$ are
\begin{equation}
    \lamd{I_A} = (1.47, 0.53), \quad  
    \lamd{I_B} = (1.36, 0.64).    
\end{equation}
The eigenvalues of inputs $\{\rho\}=\{\rho_a, \rho_b, \rho_c\}$ are
\begin{align}
    &\lamd{\rho_a} = (0.81, 0.19),\quad \lamd{\rho_b} = (0.71, 0.29), \quad \lamd{\rho_c} = (0.61, 0.39).
\end{align}
For $n=3$, the eigenvalues of $I_A$ and $I_B$ are
\begin{align}
 \lamd{I_A} = (1.81, 0.61, 0.58), \quad \lamd{I_B} = (1.41, 1.39, 0.19).     
\end{align}
The eigenvalues of inputs $\{\rho\}=\{\rho_d, \rho_e, \rho_f\}$ are
\begin{align}
    \lamd{\rho_d} = (0.80, 0.15, 0.05), \quad
    \lamd{\rho_e} = (0.55, 0.45, 0.00), \quad
    \lamd{\rho_f} = (0.60, 0.30, 0.10).
\end{align}
The ranges of achievable measured values $\{I_A\}$ and $\{I_B\}$ for partially coherent waves are calculated using Eq.~(\ref{eq:k_range}).

\subsection{Data for Fig.~\ref{fig:Fig3}}

The set $\Lambda$ in Fig.~\ref{fig:Fig3}(a) is an elliptical disk in $\Delta_3^\downarrow$ centered at $(0.55, 0.32, 0.13)$, with major axis of length $0.32$ along the $(0.05, 0.04, -0.09)$ direction and minor axis of length $0.12$. Its supremum and infimum are 
\begin{equation}
\bm{\lambda}^{\downarrow}_{\sup} = (0.65, 0.29, 0.06), \quad  \bm{\lambda}^{\downarrow}_{\inf} = (0.45, 0.36, 0.19).
\end{equation}
Its minimum and maximum entropy elements are
\begin{align}
\bm{\lambda}^{\downarrow}_{H_{\min}} = (0.55, 0.38, 0.07), \quad  
\bm{\lambda}^{\downarrow}_{H_{\max}} = (0.50, 0.33, 0.17).
\end{align}

The set $\Lambda$ in Fig.~\ref{fig:Fig3}(d) is a regular hexagon in $\Delta^{\downarrow}_3$ centered at $(0.525, 0.350, 0.125)$, with vertices at 
\begin{align}
&(0.525, 0.275, 0.200), \quad (0.600, 0.275, 0.125), \quad
(0.600, 0.350, 0.050), \\
&(0.525, 0.425, 0.050), \quad (0.450, 0.425, 0.125), \quad (0.450, 0.350, 0.200).  
\end{align}
Its supremum and infimum are
\begin{align}
    \bm{\lambda}^{\downarrow}_{\sup} = (0.600,0.350,0.050),  \quad
    \bm{\lambda}^{\downarrow}_{\inf} = (0.450,0.350,0.200).
\end{align}

\section{Proofs of Eqs.~(\ref{eq:restrict_outer_bounds}) and (\ref{eq:restrict_inner_bounds})}\label{sec:proof_maxmin}

To prove Eqs.~(\ref{eq:restrict_outer_bounds}) and (\ref{eq:restrict_inner_bounds}), we first establish a lemma:

\begin{lemma*}
Let $\Lambda \subseteq \Delta_n^{\downarrow}$ be nonempty and compact. Denote by $\Lambda_{\min}$ and $\Lambda_{\max}$ the sets of minimal and maximal elements of $\Lambda$ in the majorization order. Then for any $\bm{x} \in \Lambda$, there exist $\bm{x}_m \in \Lambda_{\min}$ and $\bm{x}_M \in \Lambda_{\max}$ such that 
\begin{equation}\label{eq:nxmbounds}
\bm{x}_m \prec \bm{x} \prec \bm{x}_M.   
\end{equation}
In particular, $\Lambda_{\min}$ and $\Lambda_{\max}$ are nonempty.
\end{lemma*}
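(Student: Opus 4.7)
The plan is to combine a strictly Schur-convex ``potential'' function with compactness to select maximal and minimal elements above and below any prescribed $\bm{x} \in \Lambda$. This avoids any appeal to Zorn's lemma and gives the extra information that every $\bm{x}$ is sandwiched between a minimal and maximal element, not merely that $\Lambda_{\min}$ and $\Lambda_{\max}$ are nonempty.

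First I would fix $\bm{x} \in \Lambda$ and define the upper cone $U(\bm{x}) = \{\bm{y} \in \Lambda : \bm{x} \prec \bm{y}\}$. Since all vectors lie in $\Delta_n^\downarrow$, the majorization condition reduces to the finite family of weak inequalities $\sum_{i=1}^k y_i \geq \sum_{i=1}^k x_i$ for $k = 1, \dots, n-1$ (the $k=n$ case is automatic). Hence $U(\bm{x})$ is the intersection of $\Lambda$ with finitely many closed half-spaces, so it is compact; it is also nonempty because $\bm{x} \in U(\bm{x})$. Next I would introduce a continuous, strictly Schur-convex function on $\Delta_n^\downarrow$, for example $\phi(\bm{y}) = \sum_i y_i^2$, and let $\bm{x}_M$ be any maximizer of $\phi$ on $U(\bm{x})$, which exists by compactness. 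To see that $\bm{x}_M \in \Lambda_{\max}$, suppose there were $\bm{z} \in \Lambda$ with $\bm{x}_M \prec \bm{z}$ and $\bm{z} \neq \bm{x}_M$. Transitivity gives $\bm{x} \prec \bm{z}$, so $\bm{z} \in U(\bm{x})$; strict Schur-convexity then forces $\phi(\bm{z}) > \phi(\bm{x}_M)$, contradicting maximality. Thus $\bm{x} \prec \bm{x}_M$ with $\bm{x}_M \in \Lambda_{\max}$.

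The construction of $\bm{x}_m$ is symmetric: the lower cone $L(\bm{x}) = \{\bm{y} \in \Lambda : \bm{y} \prec \bm{x}\}$ is nonempty and compact for the same reason, and taking $\bm{x}_m$ to be a minimizer of $\phi$ on $L(\bm{x})$ yields $\bm{x}_m \in \Lambda_{\min}$ by the analogous contradiction argument. Nonemptiness of $\Lambda_{\max}$ and $\Lambda_{\min}$ then follows by picking any $\bm{x} \in \Lambda$, which exists because $\Lambda$ is nonempty. The only subtle point — and as far as I can see the only real step that needs justification — is that strict Schur-convexity on $\Delta_n^\downarrow$ really gives the strict inequality $\phi(\bm{y}_1) < \phi(\bm{y}_2)$ whenever $\bm{y}_1 \prec \bm{y}_2$ with $\bm{y}_1 \neq \bm{y}_2$. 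This is where it matters that all vectors are already sorted, so ``equal up to permutation'' collapses to ``equal''; the squared-norm $\sum_i y_i^2$ is the standard witness, and the claim can be verified from the usual $T$-transform decomposition of majorization.
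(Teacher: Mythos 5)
Your proposal is correct and follows essentially the same route as the paper's own proof: both restrict to the compact upper/lower sets $\{\bm{y}\in\Lambda : \bm{x}\prec\bm{y}\}$ and $\{\bm{y}\in\Lambda : \bm{y}\prec\bm{x}\}$ and extremize the strictly Schur-convex function $\sum_i y_i^2$ there, using transitivity to promote extremality to maximality/minimality in all of $\Lambda$. The only cosmetic difference is that you fold the two-step argument (extremal in the cone, then extremal in $\Lambda$) into a single contradiction.
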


\begin{proof}
We first recall a standard result from majorization theory.
Let $\varphi:\mathbb{R}\to\mathbb{R}$ be convex and define
$\Phi(\bm{z}) \coloneqq \sum_{i=1}^n \varphi(z_i)$.
If $\bm{u},\bm{v}\in\mathbb{R}^n$ satisfy $\bm{u} \prec \bm{v}$, then
\begin{equation}\label{eq:Schur-ineq}
  \Phi(\bm{u}) \le \Phi(\bm{v}).
\end{equation}
Moreover, if $\varphi$ is strictly convex and $\bm{u}$ is not a permutation of $\bm{v}$,
then the inequality in \eqref{eq:Schur-ineq} is strict; see Ref.~\cite{marshall2011}, Ch.~3.
On $\Delta_n^\downarrow$, vectors are arranged in nonincreasing order,
so ``permutation'' reduces to equality.

We apply this result with the strictly convex function $\varphi(t)=t^2$ and define
\begin{equation}\label{eq:defF}
  F(\bm{z}) \coloneqq \sum_{i=1}^n z_i^2, \qquad \bm{z}\in\Delta_n^\downarrow.
\end{equation}
Then for all $\bm{u},\bm{v}\in\Delta_n^\downarrow$,
\begin{equation}\label{eq:Schur-strict}
  \bm{u}\prec \bm{v}, \, \bm{u}\ne \bm{v} \quad\Longrightarrow\quad F(\bm{u}) < F(\bm{v}).
\end{equation}
So $F$ is strictly increasing along the majorization order.

\medskip
\noindent
\textit{Minimal element below a given point.}
Fix $\bm{x}\in\Lambda$ and consider the lower set
\begin{equation}\label{eq:Dx}
  D_x \coloneqq \{\bm{y}\in\Lambda : \bm{y} \prec \bm{x}\}.
\end{equation}
We have $\bm{x}\in D_x$, so $D_x$ is nonempty.
The relation $\bm{y}\prec \bm{x}$ is defined by finitely many linear inequalities
in the coordinates of $\bm{y}$ [see Eq.~(\ref{eq:majorization_1})], hence
$\{\bm{y}\in\Delta_n^\downarrow : \bm{y}\prec \bm{x}\}$ is closed in $\Delta_n^\downarrow$.
Therefore, $D_x = \Lambda \cap \{\bm{y} : \bm{y}\prec \bm{x}\}$ is a closed subset of the
compact set $\Lambda$ and is itself compact.

By continuity of $F$, there exists $\bm{x}_m \in D_x$ such that
\begin{equation}\label{eq:Fn-min}
  F(\bm{x}_m) = \min_{\bm{y}\in D_x} F(\bm{y}).
\end{equation}
If $\bm{z}\in D_x$ and $\bm{z}\prec \bm{x}_m$ with $\bm{z}\ne \bm{x}_m$, then \eqref{eq:Schur-strict}
would imply $F(\bm{z})<F(\bm{x}_m)$, contradicting \eqref{eq:Fn-min}.
Hence, $\bm{x}_m$ is minimal in $D_x$ with respect to $\prec$.
If $\bm{w}\in\Lambda$ satisfied $\bm{w}\prec \bm{x}_m$ with $\bm{w}\ne \bm{x}_m$, then using $\bm{x}_m\prec \bm{x}$ and the
transitivity of $\prec$, we would obtain
$\bm{w}\prec \bm{x}$, i.e., $\bm{w}\in D_x$, contradicting the minimality of $\bm{x}_m$
in $D_x$.
Thus, $\bm{x}_m$ is minimal in $\Lambda$, so $\bm{x}_m\in\Lambda_{\min}$, and by
construction $\bm{x}_m\prec \bm{x}$.

\medskip
\noindent
\textit{Maximal element above a given point.}
Similarly, consider the upper set
\begin{equation}\label{eq:Ux}
  U_x \coloneqq \{\bm{y}\in\Lambda : \bm{x} \prec \bm{y}\}.
\end{equation}
Again $\bm{x}\in U_x$, so $U_x$ is nonempty, and the same closedness argument
shows that $U_x$ is compact.

By continuity of $F$, there exists $\bm{x}_M \in U_x$ such that
\begin{equation}\label{eq:Fm-max}
  F(\bm{x}_M) = \max_{\bm{y}\in U_x} F(\bm{y}).
\end{equation}
If $\bm{z}\in U_x$ and $\bm{x}_M\prec \bm{z}$ with $\bm{x}_M\ne \bm{z}$, then
\eqref{eq:Schur-strict} gives $F(\bm{x}_M)<F(\bm{z})$, contradicting \eqref{eq:Fm-max}.
Thus, $\bm{x}_M$ is maximal in $U_x$.
If $\bm{w}\in\Lambda$ satisfied $\bm{x}_M\prec \bm{w}$ with $\bm{x}_M\ne \bm{w}$, then from $\bm{x}\prec \bm{x}_M$
and transitivity of $\prec$, we would obtain $\bm{x}\prec \bm{w}$, i.e., $\bm{w}\in U_x$,
contradicting the maximality of $\bm{x}_M$ in $U_x$.
Hence, $\bm{x}_M$ is maximal in $\Lambda$, so $\bm{x}_M\in\Lambda_{\max}$, and by
construction $\bm{x}\prec \bm{x}_M$.

\medskip
We have shown that for each $\bm{x}\in\Lambda$ there exist
$\bm{x}_m \in\Lambda_{\min}$ and $\bm{x}_M \in\Lambda_{\max}$ such that
$\bm{x}_m \prec \bm{x} \prec \bm{x}_M$. Finally, taking a minimizer (maximizer) of $F$ on the nonempty compact set $\Lambda$ shows that $\Lambda_{\min}$ and $\Lambda_{\max}$ are nonempty.
\end{proof}

Now we prove Eqs.~(\ref{eq:restrict_outer_bounds}) and (\ref{eq:restrict_inner_bounds}).

\begin{proof}
We first prove Eq.~(\ref{eq:restrict_outer_bounds}). According to the lemma, for any $k' \in \Lambda \backslash \Lambda_{\max}$, there exists $k \in \Lambda_{\max}$ such that $k' \prec k$. Using Eq.~(\ref{eq:majorization_order_preserved}), we obtain 
\begin{equation}
\{o\}_{k'} \subseteq \{o\}_{k}.
\end{equation}
Therefore,
\begin{equation}
\bigcup_{k' \in \Lambda \backslash \Lambda_{\max}} \{o\}_{k'}  \subseteq  \bigcup_{k \in \Lambda_{\max}} \{o\}_k,
\end{equation}
which implies Eq.~(\ref{eq:restrict_outer_bounds}). The proof of Eq.~(\ref{eq:restrict_inner_bounds}) follows by an analogous argument.
\end{proof}
A numerical demonstration using the $n=3$ elliptical disk in Fig.~\ref{fig:Fig3}(a) confirms that the outer and inner bounds are attained by maximal and minimal elements [Fig.~\ref{fig:Fig5}].

\begin{figure}[t]
\centering
\includegraphics[width=0.5\textwidth]{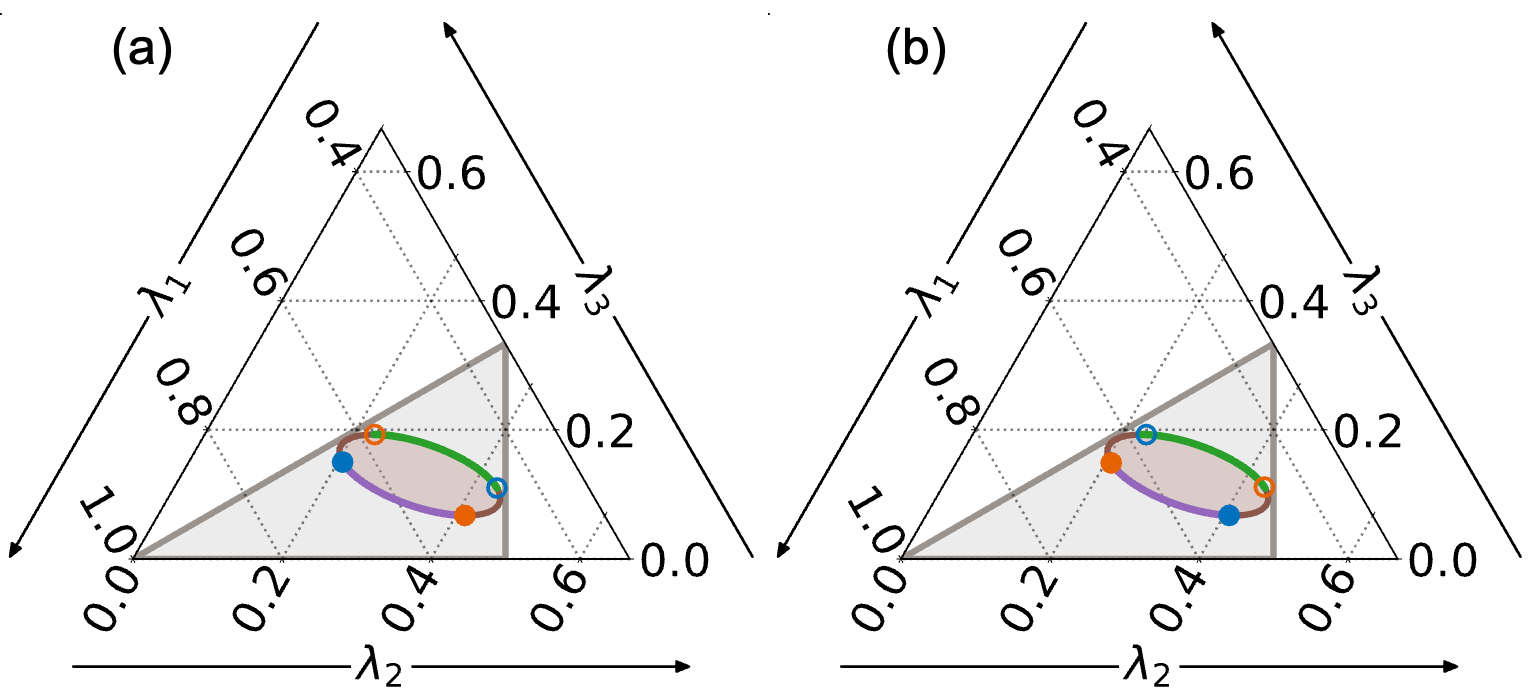}
\caption{Coherence spectra that attain the outer bounds (filled circles) and inner bounds (empty circles) for (a) $I_A$ and (b) $I_B$ measurements for the $n=3$ elliptical disk set in Fig.~\ref{fig:Fig3}(a). These spectra are elements of $\Lambda_{\max}$ (purple) and $\Lambda_{\min}$ (green), confirming that maximal and minimal elements attain the exact outer and inner bounds.}
\label{fig:Fig5}
\end{figure}

\section{Additional numerical demonstrations of supremum and infimum bounds}\label{sec:additional_ND}

In this section, we provide additional numerical demonstrations of the universality and optimality of the supremum and infimum bounds, complementing Figs.~\ref{fig:Fig1} and~\ref{fig:Fig3}.

\subsection{Finite set}

\begin{figure}[htbp]
\centering
\includegraphics[width=0.6\textwidth]{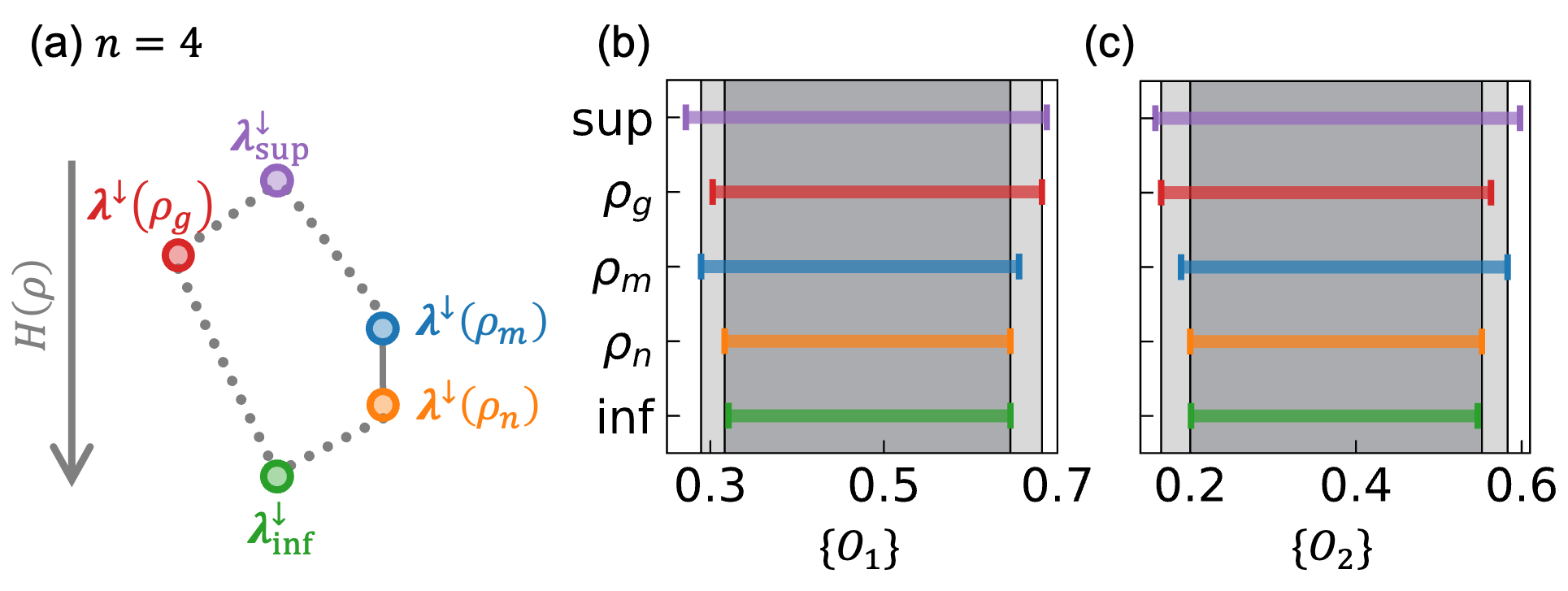}
\caption{Numerical demonstration for an $n=4$ finite set $\Lambda$. (a) Hasse diagram showing the majorization relations among coherence spectra $\lamd{\rho_g}$, $\lamd{\rho_m}$, and $\lamd{\rho_n}$. An edge indicates a strict majorization relation, and a higher vertical position corresponds to a lower entropy $H(\rho)$. The supremum $\bm{\lambda}^{\downarrow}_{\mathrm{sup}}$ and infimum $\bm{\lambda}^{\downarrow}_{\mathrm{inf}}$ of the set are also shown. (b,c) Achievable measured values $\{O_1\}$ and $\{O_2\}$ for two measurement operators. The supremum and infimum bounds (purple and green intervals) satisfy $\{o\}_{\mathrm{inf}} \subseteq \{o\}_k \subseteq \{o\}_{\mathrm{sup}}$ for $k \in \{g, m, n\}$. The exact outer and inner bounds are indicated by light and dark shades.}
\label{fig:Fig6}
\end{figure}

Figure~\ref{fig:Fig6} shows a finite set $\Lambda$ with $n=4$ comprising partially coherent waves with coherence spectra
\begin{equation}
    \lamd{\rho_g} = (0.55, 0.35, 0.08, 0.02), \quad
    \lamd{\rho_m} = (0.62, 0.17, 0.15, 0.06), \quad
    \lamd{\rho_n} = (0.56, 0.23, 0.14, 0.07).
\end{equation}
The supremum and infimum of $\Lambda$ are
\begin{equation}
    \bm{\lambda}^{\downarrow}_{\sup} = (0.62, 0.28, 0.08, 0.02), \quad \bm{\lambda}^{\downarrow}_{\inf} = (0.55, 0.24, 0.14, 0.07).
\end{equation}
We obtain the ranges of achievable measured values from Eq.~(\ref{eq:k_range}) for two measurement operators with eigenvalues
\begin{equation}
    \lamd{O_1} = (0.74, 0.66, 0.53, 0.09),\quad
    \lamd{O_2} = (0.80, 0.30, 0.20, 0.10).
\end{equation}

The supremum and infimum bounds (purple and green intervals) satisfy $\{ o \}_{\inf} \subseteq \{ o \}_k \subseteq \{ o \}_{\sup}$ for $k \in \{ g,m,n \}$ in both measurements, closely matching the exact outer and inner bounds (light and dark shades):
\begin{align}
    &\{O_1\}_{\sup} = [0.27, 0.69], \quad &\{O_1\}_{\mathrm{outer}} = [0.29, 0.68]; \qquad
    &\{O_1\}_{\inf} = [0.32, 0.65], \quad &\{O_1\}_{\mathrm{inner}} = [0.32, 0.65].\\
    &\{O_2\}_{\sup} = [0.16, 0.60], \quad &\{O_2\}_{\mathrm{outer}} = [0.17, 0.58]; \qquad
    &\{O_2\}_{\inf} = [0.20, 0.55], \quad &\{O_2\}_{\mathrm{inner}} = [0.20, 0.55].
\end{align}

\subsection{Closed infinite set}

\begin{figure}[htbp]
\centering
\includegraphics[width=0.6\textwidth]{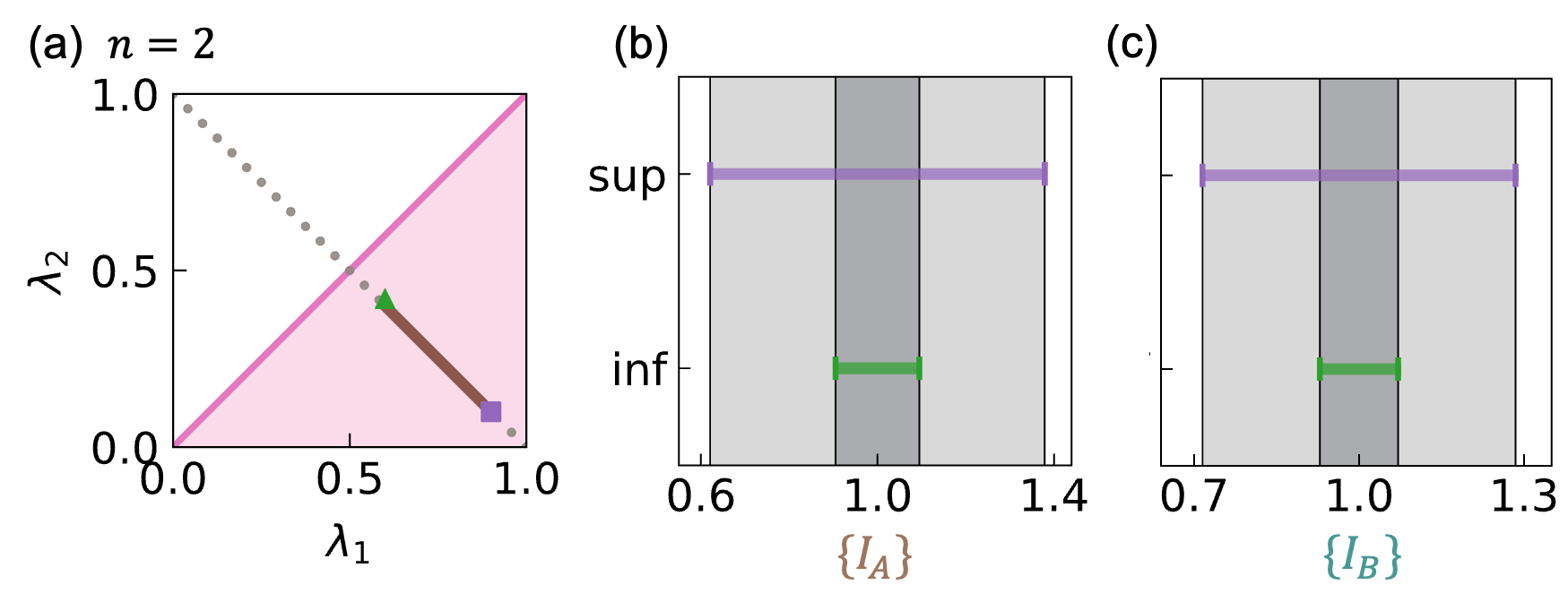}
\caption{Numerical example for a closed infinite set $\Lambda$ with $n=2$. (a) The set $\Lambda$ is a line segment with endpoints $(0.90,0.10)$ and $(0.60,0.40)$. The supremum $\bm{\lambda}^{\downarrow}_{\mathrm{sup}}$ (purple square) and infimum $\bm{\lambda}^{\downarrow}_{\mathrm{inf}}$ (green triangle) are its endpoints. (b,c) Achievable measured values $\{I_A\}$ and $\{I_B\}$. The supremum and infimum bounds (purple and green intervals) coincide with the exact outer and inner bounds (light and dark shades), respectively.}
\label{fig:Fig7}
\end{figure}
Figure~\ref{fig:Fig7} shows a closed infinite set with $n=2$. The set $\Lambda$ is a line segment in $\Delta_2^\downarrow$ with endpoints $(0.90,0.10)$ and $(0.60,0.40)$. The supremum and infimum are its endpoints:
\begin{equation}
\bm{\lambda}^{\downarrow}_{\sup} = (0.90,0.10), \quad \bm{\lambda}^{\downarrow}_{\inf} = (0.60,0.40).  
\end{equation}
The outer and inner bounds (light and dark shades) are attained by $\bm{\lambda}^{\downarrow}_{\sup}$ and $\bm{\lambda}^{\downarrow}_{\inf}$ (purple and green intervals) for both $I_A$ and $I_B$ measurements.

\begin{figure}[tb]
\centering
\includegraphics[width=0.22\textwidth]{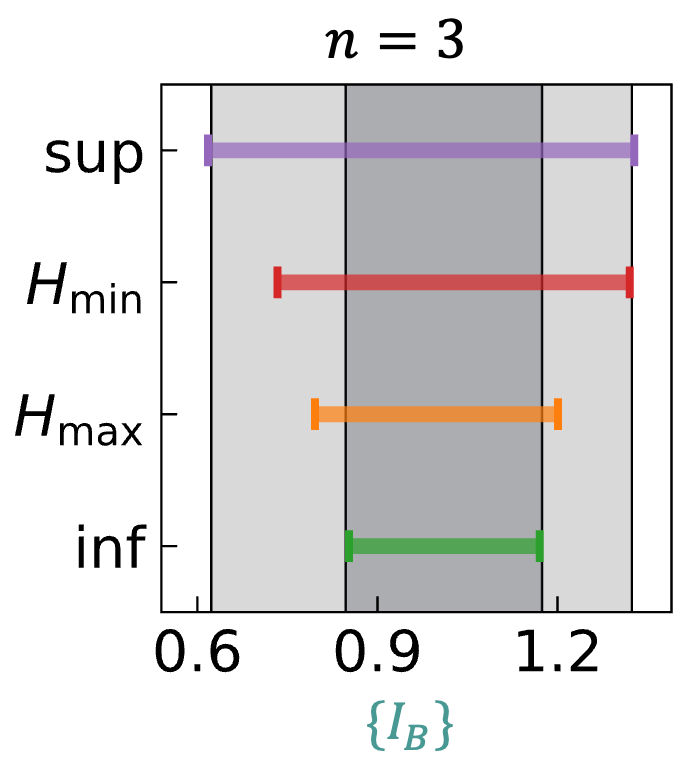}
\caption{Achievable measured values $\{I_B\}$ for the $n=3$ elliptic disk set from Fig.~\ref{fig:Fig3}(a), complementing the $\{I_A\}$ results in Fig.~\ref{fig:Fig3}(b). The supremum and infimum bounds (purple and green intervals) closely match the exact outer and inner bounds (light  and dark shades), while the entropy minimum ($H_{\min}$) and maximum ($H_{\max}$) elements fail to bound the measurement.}
\label{fig:Fig8}
\end{figure}

Figure~\ref{fig:Fig3}(b) shows the range of achievable measured values $\{I_A\}$ for the $n=3$ set in Fig.~\ref{fig:Fig3}(a). Figure~\ref{fig:Fig8} shows the corresponding results for $\{I_B\}$. The supremum and infimum bounds (purple and green intervals) also match the exact outer and inner bounds (light and dark shades):
\begin{align}
    &\{I_B\}_{\sup} = [0.62, 1.33], \quad &\{I_B\}_{\mathrm{outer}} = [0.62, 1.32]; \qquad
    &\{I_B\}_{\inf} = [0.85, 1.17], \quad &\{I_B\}_{\mathrm{inner}} = [0.85, 1.17].
\end{align}

Figure~\ref{fig:Fig9} provides an additional example of a closed infinite set with $n=3$. The set $\Lambda$ is formed by the union of the elliptical disk from Fig.~\ref{fig:Fig3}(a) and an external point $(0.76, 0.15, 0.09)$.
The supremum and infimum are
\begin{equation}
    \bm{\lambda}^{\downarrow}_{\sup} = (0.76, 0.19, 0.05), \quad
    \bm{\lambda}^{\downarrow}_{\inf} = (0.46, 0.35, 0.19).
\end{equation}
Both lie outside $\Lambda$. They yield bounds (purple and green intervals) that tightly match the exact outer and inner bounds (light and dark shades) for both $I_A$ and $I_B$ measurements:
\begin{align}
    &\{I_A\}_{\sup} = [0.65, 1.52], \quad &\{I_A\}_{\mathrm{outer}} = [0.67, 1.52]; \qquad
    &\{I_A\}_{\inf} = [0.83, 1.15], \quad &\{I_A\}_{\mathrm{inner}} = [0.82, 1.16].\\
    &\{I_B\}_{\sup} = [0.48, 1.35], \quad &\{I_B\}_{\mathrm{outer}} = [0.48, 1.32]; \qquad
    &\{I_B\}_{\inf} = [0.85, 1.17], \quad &\{I_B\}_{\mathrm{inner}} = [0.85, 1.17].
\end{align}

In contrast, the entropy minimum and maximum elements
\begin{equation}
    \bm{\lambda}^{\downarrow}_{H_{\min}} = (0.76, 0.15, 0.09), \quad
    \bm{\lambda}^{\downarrow}_{H_{\max}} = (0.50, 0.33, 0.17).
\end{equation}
are elements of $\Lambda$ but fail to bound either measurement accurately.

\begin{figure}[htbp]
\centering
\includegraphics[width=0.67\textwidth]{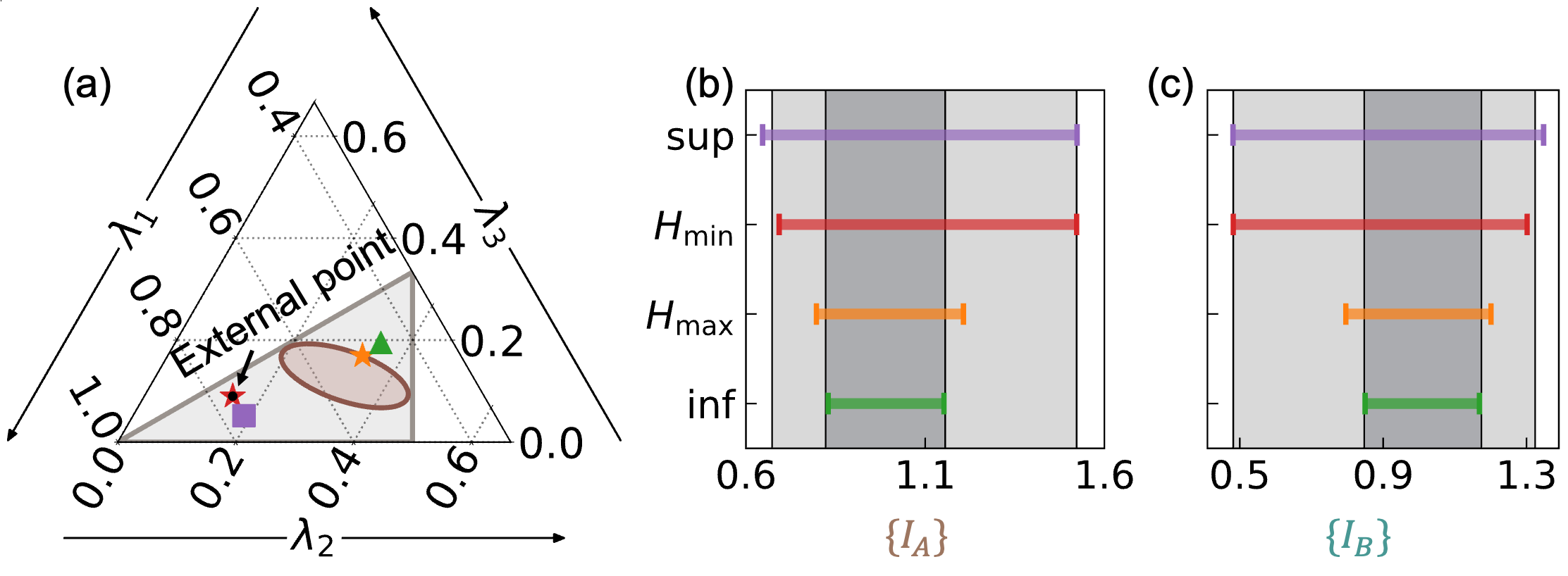}
\caption{Additional numerical example for a closed infinite set $\Lambda$ with $n=3$. (a) The set $\Lambda$ is the union of the elliptic disk from Fig.~\ref{fig:Fig3}(a) and an external point $(0.76, 0.15, 0.09)$. Both the supremum $\bm{\lambda}^{\downarrow}_{\mathrm{sup}}$ (purple square) and the infimum $\bm{\lambda}^{\downarrow}_{\mathrm{inf}}$ (green triangle) lie outside $\Lambda$. (b,c) Achievable measured values $\{I_A\}$ and $\{I_B\}$. The supremum and infimum bounds (purple and green intervals) tightly match the exact outer and inner bounds (light and dark shades) for both measurements. In contrast, the entropy minimum ($H_{\min}$) and maximum ($H_{\max}$) elements of $\Lambda$ fail to bound either measurement accurately.}
\label{fig:Fig9}
\end{figure}

\section{Construction of inscribing and circumscribing polygons}
\label{sec:inscribing_circumscribing}

\begin{figure}[b]
\centering
\includegraphics[width=0.27\textwidth]{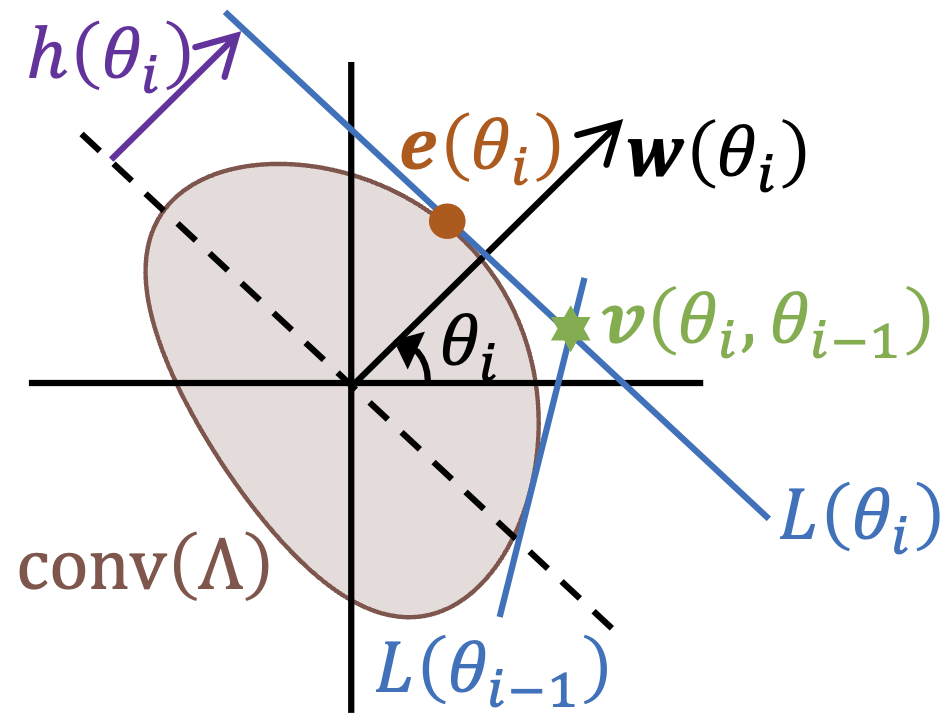}
\caption{Geometric construction of inscribing and circumscribing polygons for a 2D convex set $\operatorname{conv}(\Lambda)$. For a given direction $\bm{w}(\theta_i)$, the supporting line $L(\theta_i)$ (blue) touches the boundary at the tangent point $\bm{e}(\theta_i)$, with $h(\theta_i)$ denoting the support function value. The collection of all such boundary points $\{\bm{e}(\theta_i)\}$ defines the inscribing polygon. The intersection of adjacent supporting lines $L(\theta_{i-1})$ and $L(\theta_i)$ defines a vertex $\bm{v}(\theta_i, \theta_{i-1})$ of the circumscribing polygon. The collection of all such vertices $\bm{v}(\theta_i, \theta_{i-1})$ defines the circumscribing polygon.}
\label{fig:Fig10}
\end{figure}

In this section, we present algorithms to construct inscribing and circumscribing polygons for the convex hull of a set $\Lambda \subseteq \Delta^{\downarrow}_3$. We focus on the cases where the convex hull $\operatorname{conv}(\Lambda)$ is a 2D region embedded in $\Delta^{\downarrow}_3$.

\textit{Inscribing polygon.}---We construct the inscribing polygon by identifying boundary points of $\operatorname{conv}(\Lambda)$ that are tangent to a series of supporting lines. For each angle $\theta_i = i(2\pi/N)$ with $i = 1, \ldots, N$, we define a direction vector $\bm{w}(\theta_i) = (\cos \theta_i, \sin \theta_i)$. The supporting line orthogonal to $\bm{w}(\theta_i)$ is given by
\begin{equation}
L(\theta_i) = \{\bm{x} \in \operatorname{conv}(\Lambda) \mid \bm{x} \cdot \bm{w}(\theta_i) = h(\theta_i)\},
\label{eq:support_line}
\end{equation}
where the support function $h(\theta_i)$ is defined as
\begin{equation}
h(\theta_i) = \max_{\bm{x} \in \operatorname{conv}(\Lambda)} \bm{x} \cdot \bm{w}(\theta_i).
\label{eq:support_function}
\end{equation}
As shown in Fig.~\ref{fig:Fig10}, the supporting line $L(\theta_i)$ (blue line) keeps the entire $\operatorname{conv}(\Lambda)$ on one side while touching the boundary at point $\bm{e}(\theta_i)$ without intersecting the interior. $h(\theta_i)$ represents the maximum projection of any point in $\operatorname{conv}(\Lambda)$ along the direction $\bm{w}(\theta_i)$.

To compute $h(\theta_i)$ and $\bm{e}(\theta_i)$, we parameterize the boundary of $\operatorname{conv}(\Lambda)$ as $\bm{p}(t) = (x(t), y(t))$ with a scalar parameter $t \in [t_0, t_1]$.
Equation~\eqref{eq:support_function} then becomes
\begin{equation}
h(\theta_i) = \max_{t \in [t_0, t_1]} \bm{p}(t) \cdot \bm{w}(\theta_i).
\label{eq:support_function_t}
\end{equation}
The parameter $t^*$ that maximizes this projection satisfies the gradient condition
\begin{equation}
\frac{d(\bm{p} \cdot \bm{w})}{dt} = \frac{dx}{dt} \cos \theta_i + \frac{dy}{dt} \sin \theta_i = 0.
\label{eq:derivative_G}
\end{equation}
The support function is then $h(\theta_i) = \bm{p}(t^*) \cdot \bm{w}(\theta_i)$, and the tangent boundary point is $\bm{e}(\theta_i) = \bm{p}(t^*)$. The collection of all such boundary points $\{\bm{e}(\theta_i)\}$ over the full $2\pi$ angular range forms the inscribing polygon.

\textit{Circumscribing polygon.}---We construct the circumscribing polygon by finding the intersections of adjacent supporting lines. As illustrated in Fig.~\ref{fig:Fig10}, each vertex $\bm{v}(\theta_{i}, \theta_{i-1})$ (green star) of the circumscribing polygon is given by $\bm{v}(\theta_i, \theta_{i-1}) = L(\theta_{i}) \cap L(\theta_{i-1})$. These vertices can be computed using~\cite{prince1990}
\begin{equation}
\bm{v}(\theta_i, \theta_{i-1}) = \frac{1}{\sin(2\pi/N)} 
\begin{bmatrix}
\sin \theta_i & -\sin \theta_{i-1} \\
-\cos \theta_i & \cos \theta_{i-1}
\end{bmatrix}
\begin{bmatrix}
h(\theta_{i-1}) \\
h(\theta_i)
\end{bmatrix},
\label{eq:circumscribing_vertex}
\end{equation}
where $\theta_i = i(2\pi/N)$ for $i = 1, \ldots, N$ and $\theta_0 = \theta_N$.

\textit{Implementation.}---Algorithms~\ref{alg:inscrib_polygon} and~\ref{alg:circumscrib_polygon} below provide the computational procedures for obtaining the vertices of inscribing and circumscribing polygons. We apply these algorithms to several representative geometric shapes, as shown in Fig.~\ref{fig:Fig11}. For shapes with analytically parameterizable boundaries [Fig.~\ref{fig:Fig11}(a)], we directly apply the algorithms using a single function $\bm{p}(t)$. For shapes formed by unions or intersections [Figs.~\ref{fig:Fig11}(b,c)], we parameterize each boundary segment separately, compute vertices for each piece, and combine them to form the complete polygon. For complex shapes where an analytical boundary parameterization is challenging, one can solve a constrained convex optimization problem~\cite{boyd2004} for each sampled angle following~\eqref{eq:support_function}.

\begin{algorithm}[H]
\caption{Inscribing Polygon Vertices and Support Function}
\label{alg:inscrib_polygon}
\textbf{input:} $\theta \in \mathbb{R}^N$, function handle $\bm{p}(t)$, function handle $d\bm{p}/dt$ \\
\textbf{output:} vertex coordinates of inscribing polygon $\{\bm{e}(\theta)\}$, support function values $\{h(\theta)\}$.
\begin{algorithmic}
\Procedure{INSCRIBE}{$\theta$, $\bm{p}(t)$, $d\bm{p}/dt$}
\State $\{\bm{e}(\theta)\} \leftarrow \{0\}$ 
\State $\{h(\theta)\} \leftarrow \{0\}$ 
\For{$i = 1$ \dots $N$}
\State $\bm{w} \leftarrow (\cos\theta_i, \sin\theta_i)$
\State $t^* \leftarrow $ solve Eq.~(\ref{eq:derivative_G}) \Comment{Solves for the parameter $t^*$ of maximum projection} 
\State $\bm{e}(\theta_i) \leftarrow \bm{p}(t^*)$ \Comment{Gets the boundary point coordinates}
\State $\{\bm{e}(\theta)\} \leftarrow \text{append}(\{\bm{e}(\theta)\}, \bm{e}(\theta_i))$
\State $h(\theta_i) \leftarrow \bm{e}(\theta_i) \cdot \bm{w}$ \Comment{Calculates the support function value}
\State $\{h(\theta)\} \leftarrow \text{append}(\{h(\theta)\}, h(\theta_i))$
\EndFor
\State \textbf{return} $\{\bm{e}(\theta)\}$, $\{h(\theta)\}$
\EndProcedure
\end{algorithmic}
\end{algorithm}

\begin{algorithm}[H]
\caption{Circumscribing Polygon Vertices}
\label{alg:circumscrib_polygon}
\textbf{input:} $\theta \in \mathbb{R}^N$, $\{h(\theta)\}$ \\
\textbf{output:} vertex coordinates of circumscribing polygon $\{\bm{v}(\theta)\}$.
\begin{algorithmic}
\Procedure{CIRCUMSCRIBE}{$\theta$, $\{h(\theta)\}$}
\State $\{\bm{v}(\theta)\} \leftarrow \{0\}$ 
\For{$i = 1$ \dots $N$}
\State $\bm{v}(\theta_{i},\theta_{i-1}) \leftarrow$ Compute using Eq.~(\ref{eq:circumscribing_vertex}) \Comment{Gets the intersection of adjacent supporting lines}
\State $\{\bm{v}(\theta)\} \leftarrow \text{append}(\{\bm{v}(\theta)\}, \bm{v}(\theta_{i},\theta_{i-1}))$
\EndFor
\State \textbf{return} $\{\bm{v}(\theta)\}$
\EndProcedure
\end{algorithmic}
\end{algorithm}

\begin{figure}[h]
\centering
\includegraphics[width=1\textwidth]{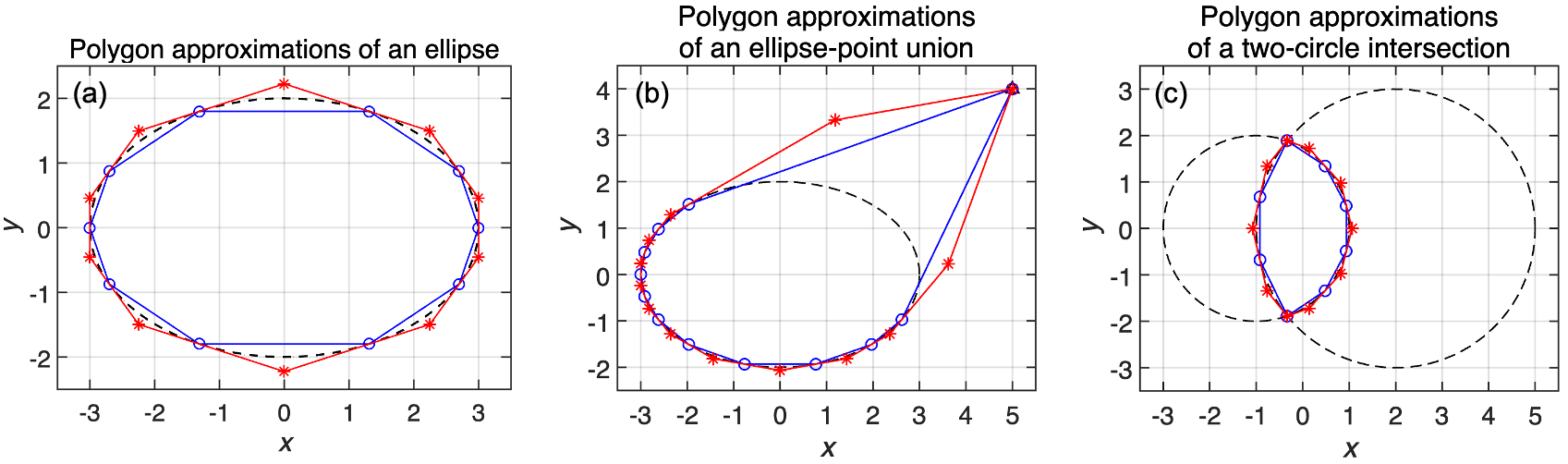}
\caption{Inscribing (blue) and circumscribing (red) polygons for representative geometric shapes (black), constructed using Algorithms~\ref{alg:inscrib_polygon} and~\ref{alg:circumscrib_polygon}. (a) An ellipse with an analytically parameterizable boundary. (b) A union of an ellipse and an external point. (c) An intersection of two circles. }
\label{fig:Fig11}
\end{figure}

\section{Convergence of supremum and infimum computations}\label{sec:convergence}

In this section, we analyze the convergence of supremum and infimum computations using the algorithm in Sec.~\ref{sec:algorithm}.

\begin{figure}[htbp]
\centering
\includegraphics[width=0.65\textwidth]{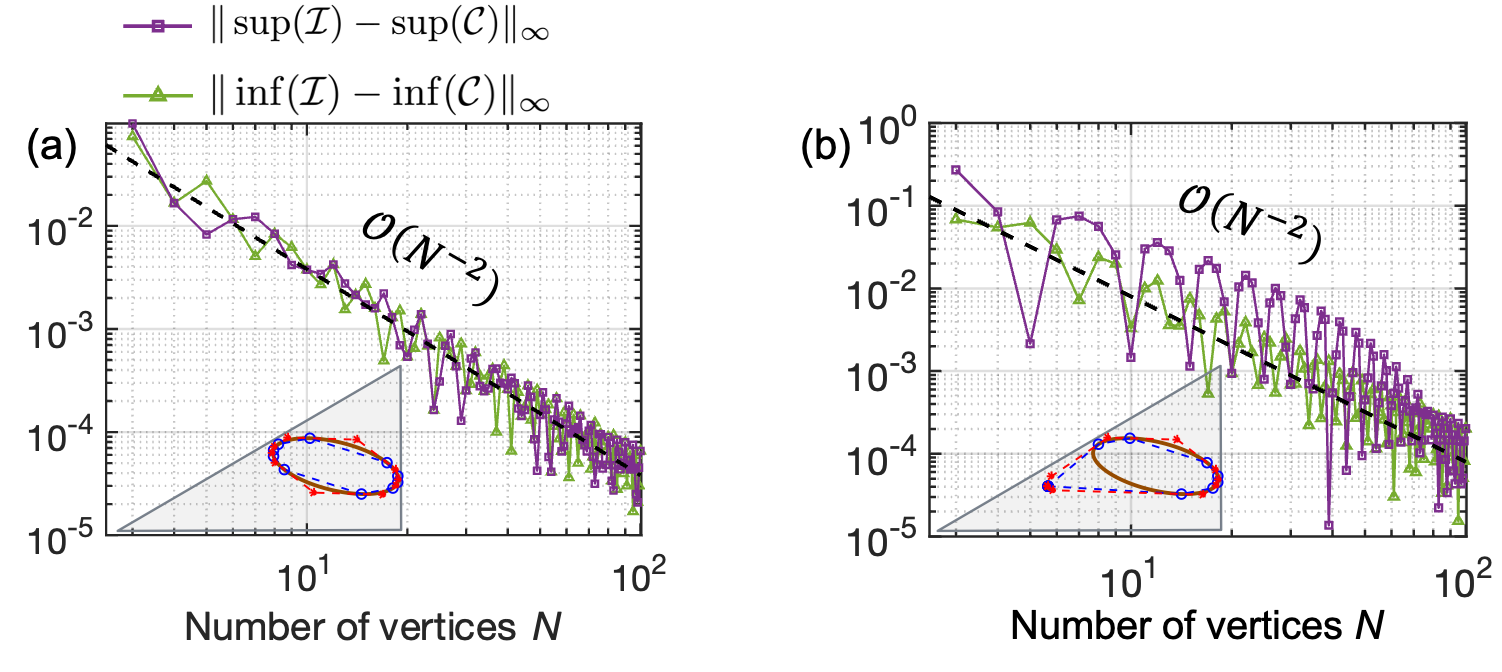}
\caption{Convergence of the supremum and infimum computations as a function of the number of polygon vertices $N$. The plots show $\left\lVert \sup(\mathcal{I}) - \sup(\mathcal{C}) \right\rVert_\infty$ and $\left\lVert \inf(\mathcal{I}) - \inf(\mathcal{C}) \right\rVert_\infty$, where $\mathcal{I}$ and $\mathcal{C}$ denote the inscribing and circumscribing polygons, respectively. (a) The elliptical disk from Fig.~\ref{fig:Fig3}(a). (b) The union shape from Fig.~\ref{fig:Fig9}(a). Black dashed lines indicate $\mathcal{O}(N^{-2})$ convergence. Insets: inscribing (blue) and circumscribing (red) polygons with $N = 10$ vertices approximating the original set $\Lambda$ (brown) within $\Delta^{\downarrow}_3$ (gray).}
\label{fig:Fig12}
\end{figure}

We approximate the convex hull of $\Lambda$ using inscribing ($\mathcal{I}$) and circumscribing ($\mathcal{C}$) polygons with varying numbers of vertices $N$ from 3 to 100. For each $N$, we compute the supremum and infimum of both $\mathcal{I}$ and $\mathcal{C}$, then evaluate their differences $\left\lVert \sup(\mathcal{I}) - \sup(\mathcal{C}) \right\rVert_\infty
$ and $\left\lVert \inf(\mathcal{I}) - \inf(\mathcal{C}) \right\rVert_\infty
$. As shown in Fig.~\ref{fig:Fig12}, increasing the number of polygon vertices $N$ systematically reduces these differences, leading to convergence toward the exact supremum and infimum of the set $\Lambda$. The convergence rate follows a trend of $\mathcal{O}(N^{-2})$, as indicated by the black dashed fitting curves.

\section{Proof of geometric locations of  supremum and infimum}\label{sec:proof_location}
In this section, we prove two results on the locations of $\bm{\lambda}^{\downarrow}_{\sup}$ and $\bm{\lambda}^{\downarrow}_{\inf}$ for a compact set $\Lambda$ with nonempty interior.

\begin{proposition}
\label{thm:sup-inf-not-interior}
Let $\Lambda \subseteq \Delta_n^\downarrow$ ($n\ge 3$) be compact with nonempty interior
with respect to the affine hyperplane
\begin{equation}\label{eq:H-def}
  H \coloneqq \left\{\bm{x}\in\mathbb{R}^n : \sum_{i=1}^n x_i = 1\right\}.
\end{equation}
Then the supremum $\bm{\lambda}^{\downarrow}_{\sup}$ and infimum $\bm{\lambda}^{\downarrow}_{\inf}$ of $\Lambda$ in the majorization order cannot lie in the interior of $\Lambda$.
\end{proposition}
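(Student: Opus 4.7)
The plan is to argue by contradiction using a one-parameter perturbation that shifts mass between the first and last coordinates of the candidate point. Suppose $\bm{\lambda}^{\downarrow}_{\sup} \in \operatorname{int}(\Lambda)$, with the interior taken with respect to $H$. Since the interior of a set is contained in the interior of any containing set in the same ambient space, we have $\operatorname{int}(\Lambda) \subseteq \operatorname{int}(\Delta^{\downarrow}_n)$. The interior of $\Delta^{\downarrow}_n$ in $H$ consists of strictly decreasing, strictly positive probability vectors, so $\bm{\lambda}^{\downarrow}_{\sup}$ must satisfy $\lambda^{\downarrow}_{\sup,1} > \lambda^{\downarrow}_{\sup,2} > \cdots > \lambda^{\downarrow}_{\sup,n} > 0$. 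This strict separation gives enough slack to perturb freely in $H$ while remaining in $\Delta^{\downarrow}_n$.

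Next I would exhibit an explicit perturbation that violates the supremum property. Let $\bm{y}_\epsilon = \bm{\lambda}^{\downarrow}_{\sup} + \epsilon(\bm{e}_1 - \bm{e}_n)$ for small $\epsilon > 0$, where $\bm{e}_i$ denotes the $i$th standard basis vector. The direction $\bm{e}_1 - \bm{e}_n$ lies in $H$, and the strict inequalities above guarantee $\bm{y}_\epsilon \in \Delta^{\downarrow}_n$ for all sufficiently small $\epsilon$. The interior assumption then gives $\bm{y}_\epsilon \in \Lambda$ for $\epsilon$ below some threshold. However, the first partial sum of $\bm{y}_\epsilon$ strictly exceeds that of $\bm{\lambda}^{\downarrow}_{\sup}$, so by the definition of majorization [Eq.~(\ref{eq:majorization_1})], $\bm{y}_\epsilon \not\prec \bm{\lambda}^{\downarrow}_{\sup}$. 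This contradicts Eq.~(\ref{eq:def_sup}), which requires $\bm{\lambda}^{\downarrow}_{\sup}$ to majorize every element of $\Lambda$. Hence $\bm{\lambda}^{\downarrow}_{\sup} \notin \operatorname{int}(\Lambda)$.

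The argument for $\bm{\lambda}^{\downarrow}_{\inf}$ is symmetric. Using the opposite perturbation $\bm{y}_\epsilon = \bm{\lambda}^{\downarrow}_{\inf} + \epsilon(\bm{e}_n - \bm{e}_1)$, the same strict-separation slack ensures $\bm{y}_\epsilon \in \Lambda \cap \Delta^{\downarrow}_n$ for small $\epsilon > 0$. Now the first partial sum of $\bm{y}_\epsilon$ is strictly smaller than that of $\bm{\lambda}^{\downarrow}_{\inf}$, so $\bm{\lambda}^{\downarrow}_{\inf} \not\prec \bm{y}_\epsilon$. This contradicts Eq.~(\ref{eq:def_inf}), which requires every element of $\Lambda$ to majorize $\bm{\lambda}^{\downarrow}_{\inf}$.

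The main technical point is the inclusion $\operatorname{int}(\Lambda) \subseteq \operatorname{int}(\Delta^{\downarrow}_n)$ and its consequence that all coordinates of the candidate point are strictly separated and strictly positive. Without this, the perturbation could exit $\Delta^{\downarrow}_n$—by driving the last coordinate negative for the supremum case, or by violating the ordering constraint for the infimum case—and the construction would fail. Once the strict-monotonicity slack is in hand, the violation of majorization is immediate from inspecting a single partial sum, so no further optimization or case analysis is needed.
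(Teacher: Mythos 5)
Your proof is correct, and it takes a genuinely different route from the paper's. Both arguments hinge on the same key observation---that an interior point of $\Lambda$ in $H$ is automatically an interior point of $\Delta_n^\downarrow$ in $H$, hence has strictly decreasing, strictly positive coordinates---but they exploit it differently. The paper applies the strictly Schur-convex function $F(\bm{z})=\sum_i z_i^2$: if $\bm{\lambda}^{\downarrow}_{\sup}$ were interior to $\Lambda$, it would be the unique interior maximizer of $F$ on $\Lambda$, and the Lagrange condition $\nabla F=\mu(1,\dots,1)$ forces it to be the uniform vector $\bm{u}=(1/n,\dots,1/n)$, which lies on the boundary of $\Delta_n^\downarrow$, contradicting interiority. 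You instead exhibit an explicit perturbation $\bm{y}_\epsilon=\bm{\lambda}^{\downarrow}_{\sup}+\epsilon(\bm{e}_1-\bm{e}_n)\in\Lambda$ whose first partial sum strictly exceeds that of $\bm{\lambda}^{\downarrow}_{\sup}$, directly contradicting the upper-bound property in Eq.~(\ref{eq:def_sup}); the infimum case is handled by the reversed perturbation. Your argument is more elementary---no Schur-convexity machinery, no Lagrange multipliers---and more constructive, since it names a specific element of $\Lambda$ that the candidate fails to dominate; it also works verbatim for $n=2$. The paper's version has the organizational advantage of reusing the function $F$ already introduced in the lemma of Appendix~\ref{sec:proof_maxmin}. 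One small remark: membership $\bm{y}_\epsilon\in\Delta_n^\downarrow$ actually follows for free from $\bm{y}_\epsilon\in B_H(\bm{\lambda}^{\downarrow}_{\sup},\varepsilon)\subseteq\Lambda\subseteq\Delta_n^\downarrow$, though your explicit verification via the strict inequalities is harmless and makes transparent why the coordinates remain sorted, so that the partial-sum comparison in Eq.~(\ref{eq:majorization_1}) applies without re-sorting.
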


\begin{proof}
We work in the relative topology on $H$. A point $\bm{x}\in\Lambda$ is an
interior point of $\Lambda$ if there exists $\varepsilon>0$ such that
\begin{equation}\label{eq:ball-in-Lambda}
  B_H(\bm{x},\varepsilon) \coloneqq \{\bm{z}\in H : \|\bm{z}-\bm{x}\|<\varepsilon\} \subseteq \Lambda.
\end{equation}
If $\Lambda \subseteq \Delta_n^\downarrow$ and \eqref{eq:ball-in-Lambda} holds,
then $B_H(\bm{x},\varepsilon) \subseteq \Delta_n^\downarrow$, so $\bm{x}$ is
also an interior point of $\Delta_n^\downarrow$ in $H$.

\medskip\noindent
\textbf{Preliminaries.}
Consider the function
\begin{equation}\label{eq:F-def}
  F(\bm{z}) \coloneqq \sum_{i=1}^n z_i^2,\qquad \bm{z}\in\Delta_n^\downarrow.
\end{equation}
The function $F$ is strictly Schur-convex [Eq.~(\ref{eq:Schur-strict})]. See~Appendix~\ref{sec:proof_maxmin} for more details about Schur-convex functions.

The set $\Delta_n^\downarrow$ is a polytope defined by
\begin{equation}\label{eq:Delta-ineq}
  \Delta_n^\downarrow
  = \left\{\bm{x}\in H : x_1 \ge x_2 \ge \cdots \ge x_n \ge 0\right\}.
\end{equation}
Its relative interior in $H$ is defined by strict inequalities:
\begin{equation}\label{eq:Delta-int}
  \operatorname{ri}(\Delta_n^\downarrow)
  = \left\{\bm{x}\in H : x_1 > x_2 > \cdots > x_n > 0\right\}.
\end{equation}
The point
\begin{equation}\label{eq:uniform}
  \bm{u} \coloneqq \Bigl(\frac{1}{n},\dots,\frac{1}{n}\Bigr)
\end{equation}
belongs to $\Delta_n^\downarrow$ but fails to satisfy the strict inequalities
in \eqref{eq:Delta-int}. Hence
\begin{equation}\label{eq:uniform-boundary}
  \bm{u} \text{ lies on the boundary of } \Delta_n^\downarrow \text{ in } H.
\end{equation}

\medskip\noindent
\textbf{Supremum.}
Since $(\Delta^{\downarrow}_n, \prec)$ forms a complete lattice, both $\bm{\lambda}^{\downarrow}_{\sup}$ and $\bm{\lambda}^{\downarrow}_{\inf}$ of $\Lambda$ exist in $\Delta^{\downarrow}_n$. Assume for contradiction that $\bm{x} \coloneqq \bm{\lambda}^{\downarrow}_{\sup}$ lies in
the interior of $\Lambda$ in $H$. Then $\bm{x}$ is an upper bound of $\Lambda$:
\begin{equation}\label{eq:upper-bound}
  \bm{z} \prec \bm{x} \quad\text{for all } \bm{z}\in\Lambda.
\end{equation}
If $\bm{z}\in\Lambda$ and $\bm{z}\neq \bm{x}$, then \eqref{eq:upper-bound} and
\eqref{eq:Schur-strict} yield
\begin{equation}\label{eq:Fx-max}
  F(\bm{z}) < F(\bm{x}) \quad\text{for all } \bm{z}\in\Lambda,\; \bm{z}\neq \bm{x}.
\end{equation}
Thus $\bm{x}$ is the unique global maximizer of $F$ on $\Lambda$.

By the interior assumption, \eqref{eq:ball-in-Lambda} holds for some
$\varepsilon>0$, making $\bm{x}$ an interior local maximizer of $F$ on
$H$. The standard Lagrange multiplier condition for an interior
extremum of $F$ under the constraint
\begin{equation}\label{eq:constraint}
  \sum_{i=1}^n z_i = 1
\end{equation}
implies the existence of $\mu\in\mathbb{R}$ such that
\begin{equation}\label{eq:lagrange}
  \nabla F(\bm{x}) = \mu (1,\dots,1).
\end{equation}
Since
\begin{equation}\label{eq:gradF}
  \nabla F(\bm{z}) = 2\bm{z} \quad\text{for all } \bm{z}\in\mathbb{R}^n,
\end{equation}
\eqref{eq:lagrange} yields
\begin{equation}\label{eq:coords-eq}
  2x_i = \mu \quad\text{for } i=1,\dots,n.
\end{equation}
All coordinates of $\bm{x}$ are therefore equal. Applying \eqref{eq:constraint} at $\bm{z}=\bm{x}$, we
obtain
\begin{equation}\label{eq:x-uniform}
  \bm{x} = \Bigl(\frac{1}{n},\dots,\frac{1}{n}\Bigr) = \bm{u}.
\end{equation}

By \eqref{eq:uniform-boundary}, $\bm{x}$ lies on the boundary of
$\Delta_n^\downarrow$ in $H$ and thus cannot be an interior point of
$\Delta_n^\downarrow$ in $H$. However,
\[
B_H(\bm{x},\varepsilon) \subseteq \Lambda \subseteq \Delta_n^\downarrow
\]
by \eqref{eq:ball-in-Lambda}, thus $\bm{x}$ is an interior point of
$\Delta_n^\downarrow$. This contradiction establishes that
$\bm{\lambda}^{\downarrow}_{\sup}$ cannot lie in the interior of $\Lambda$.

\medskip\noindent
\textbf{Infimum.}
The argument for the infimum is analogous. Assume for contradiction that $\bm{y} \coloneqq \bm{\lambda}^{\downarrow}_{\inf}$
lies in the interior of $\Lambda$ in $H$. Then $\bm{y}$ is a lower
bound of $\Lambda$:
\begin{equation}\label{eq:lower-bound}
  \bm{y} \prec \bm{z} \quad\text{for all } \bm{z}\in\Lambda.
\end{equation}
For $\bm{z}\in\Lambda$ with $\bm{z}\neq \bm{y}$, \eqref{eq:lower-bound} and
\eqref{eq:Schur-strict} yield
\begin{equation}\label{eq:Fy-min}
  F(\bm{y}) < F(\bm{z}) \quad\text{for all } \bm{z}\in\Lambda,\; \bm{z}\neq \bm{y}.
\end{equation}
Thus $\bm{y}$ is the unique global minimizer of $F$ on $\Lambda$ and hence an interior
local minimizer of $F$ on $H$.

Applying the same Lagrange multiplier condition
\eqref{eq:lagrange}--\eqref{eq:coords-eq} under the constraint
\eqref{eq:constraint}, we again obtain
\begin{equation}\label{eq:y-uniform}
  \bm{y} = \Bigl(\frac{1}{n},\dots,\frac{1}{n}\Bigr) = \bm{u},
\end{equation}
which lies on the boundary of $\Delta_n^\downarrow$ by
\eqref{eq:uniform-boundary}. As before, \eqref{eq:ball-in-Lambda} forces
$\bm{y}$ to be an interior point of $\Delta_n^\downarrow$, yielding a contradiction.

\medskip

Hence neither the supremum nor the infimum of $\Lambda$ in the majorization
order can lie in the interior of $\Lambda$.
\end{proof}

The second result further assumes the compact set $\Lambda$ to be convex. 

\begin{proposition}
\label{thm:sup-inf-singular-final}
Let $\Lambda \subseteq \Delta_n^\downarrow$ ($n\ge 3$) be convex and compact with nonempty interior
with respect to the affine hyperplane $H$ defined in \eqref{eq:H-def}. Denote by $\bm{\lambda}^{\downarrow}_{\sup}$ and $\bm{\lambda}^{\downarrow}_{\inf}$ the supremum and infimum of $\Lambda$ in the majorization order. If $\bm{\lambda}^{\downarrow}_{\sup}$ or $\bm{\lambda}^{\downarrow}_{\inf}$ lies on the boundary of $\Lambda$, then it must be a singular boundary point. More precisely,
\begin{equation}\label{eq:singular-final}
\dim N_\Lambda(\bm{x}) = n-1 \ge 2
\quad\text{for }\bm{x}=\bm{\lambda}_{\sup}^\downarrow\text{ or }
\bm{\lambda}_{\inf}^\downarrow,
\end{equation}
where $N_\Lambda(\bm{x})$ denotes the outer normal cone of $\Lambda$ at $\bm{x}$ in $H$. 
\end{proposition}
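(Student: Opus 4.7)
The approach is to exploit that $\bm{x} \coloneqq \bm{\lambda}^{\downarrow}_{\sup}$ is not an arbitrary boundary point of $\Lambda$: by the defining property of the supremum, every $\bm{z} \in \Lambda$ satisfies $\bm{z} \prec \bm{x}$, so $\Lambda$ is confined inside the \emph{lower majorization polytope}
\begin{equation*}
L(\bm{x}) \coloneqq \left\{ \bm{z} \in \Delta_n^\downarrow : \sum_{i=1}^k z_i \leq \sum_{i=1}^k x_i,\ k=1,\dots,n-1 \right\}.
\end{equation*}
The geometric observation that drives the proof is that $\bm{x}$ sits at a vertex of $L(\bm{x})$ within $H$: at $\bm{z}=\bm{x}$ each of the $n-1$ partial-sum inequalities becomes an equality, and the corresponding outward normals $(1,\dots,1,0,\dots,0)$ with $k$ leading ones ($k=1,\dots,n-1$) stay linearly independent modulo the simplex direction $(1,\dots,1)$, which one checks by a routine lower-triangularity argument on the $n\times n$ partial-sum matrix.

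From here the proof proceeds in two short steps. First, vertices of a full-dimensional polytope carry a full-dimensional normal cone, so $\dim N_{L(\bm{x})}(\bm{x}) = n-1$ within $H$. The only obstruction to $L(\bm{x})$ being full-dimensional in $H$ is the degenerate case $\bm{x} = (1/n,\dots,1/n)$, which collapses $L(\bm{x})$ to a point and thereby forces $\Lambda$ to be a point --- ruled out by the nonempty-interior hypothesis. Second, the inclusion $\Lambda \subseteq L(\bm{x})$ reverses on normal cones, so $N_{L(\bm{x})}(\bm{x}) \subseteq N_\Lambda(\bm{x})$; since the ambient tangent space of $H$ has dimension $n-1$, this forces $\dim N_\Lambda(\bm{x}) = n-1$, which is $\geq 2$ for $n\geq 3$ and certifies $\bm{x}$ as a singular boundary point in the sense of Fig.~\ref{fig:Fig3}(c).

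The infimum case is strictly symmetric: $\bm{y} \coloneqq \bm{\lambda}^{\downarrow}_{\inf}$ satisfies $\bm{y} \prec \bm{z}$ for every $\bm{z}\in\Lambda$, so $\Lambda$ lies inside the upper majorization polytope $U(\bm{y}) \coloneqq \{\bm{z}\in\Delta_n^\downarrow : \sum_{i=1}^k z_i \geq \sum_{i=1}^k y_i,\ k=1,\dots,n-1\}$; the same active-constraint count identifies $\bm{y}$ as a vertex of $U(\bm{y})$, and the same reverse-inclusion step transfers full-dimensionality to $N_\Lambda(\bm{y})$, with the degenerate case $\bm{y} = (1,0,\dots,0)$ again excluded by nonempty interior. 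I expect the main subtlety not to be in the geometric idea, which is self-contained once the upper and lower majorization polytopes are introduced, but in writing the normal cones cleanly in the relative topology of $H$: one must consistently interpret ``normal cone'' as a subset of the $(n-1)$-dimensional tangent space of $H$ (i.e., modulo the direction $(1,\dots,1)$) so that the bound $\dim N_\Lambda(\bm{x}) \leq n-1$ actually holds and the dimension count closes to equality.
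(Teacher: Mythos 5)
Your proposal is correct and follows essentially the same route as the paper: both confine $\Lambda$ in the majorization polytope $\Lambda' = \{\bm{z} \in \Delta_n^\downarrow : \bm{z} \prec \bm{\lambda}^{\downarrow}_{\sup}\}$ (resp.\ its upper counterpart for the infimum), obtain a full-dimensional normal cone for that polytope at the extremum, and transfer it to $\Lambda$ via the reverse inclusion $N_{\Lambda'}(\bm{x}) \subseteq N_{\Lambda}(\bm{x})$ together with the upper bound $\dim N_\Lambda(\bm{x}) \le n-1$ in $H$. The only difference is in how the vertex property is certified --- you count the $n-1$ active partial-sum constraints with normals that are linearly independent modulo $(1,\dots,1)$, whereas the paper exhibits the strictly Schur-convex linear functional $f(\bm{x}) = \sum_i (n-i)x_i$ uniquely maximized (minimized) at $\bm{\lambda}^{\downarrow}_{\sup}$ ($\bm{\lambda}^{\downarrow}_{\inf}$) --- and both arguments, including your exclusion of the degenerate case $\bm{x}=(1/n,\dots,1/n)$ via the nonempty-interior hypothesis, are sound.
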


\begin{proof}
We view $\Delta_n^\downarrow$ and $\Lambda$ as subsets of $H$, which is an
$(n-1)$-dimensional Euclidean space.
Let
\begin{equation}\label{eq:L-final}
L := \Bigl\{\bm{u}\in\mathbb{R}^n : \sum_{i=1}^n u_i = 0\Bigr\}
\end{equation}
denote the translation space of $H$, equipped with the Euclidean inner product.
For a closed convex set $C\subseteq H$ and $\bm{x}\in C$, the outer normal cone of
$C$ at $\bm{x}$ is defined by
\begin{equation}\label{eq:normal-cone-final}
N_C(\bm{x})
:= \{\bm{u}\in L : \bm{u}\cdot (\bm{y}-\bm{x}) \le 0,\ \forall\,\bm{y}\in C\}.
\end{equation}
A boundary point $\bm{x}\in\partial C$ is called \emph{singular} if
\begin{equation}\label{eq:singular-def-final}
\dim N_C(\bm{x})\ge 2.
\end{equation}

\medskip\noindent
\textbf{Preliminaries.}
Consider the linear function
\begin{equation}
f(\bm{x}) := \sum_{i=1}^n (n-i)\,x_i = \sum_{k=1}^{n-1} \sum_{i=1}^k x_i,\qquad \bm{x}\in H. \label{eq:f-partial-sums-final} 
\end{equation}
If $\bm{x},\bm{y}\in\Delta_n^\downarrow$ satisfy $\bm{x}\prec \bm{y}$, then
Eq.~(\ref{eq:majorization_1}) and Eq.~\eqref{eq:f-partial-sums-final} yield
\begin{equation}\label{eq:Schur-ineq-final}
f(\bm{x}) = \sum_{k=1}^{n-1} \sum_{i=1}^k x_i
    \le \sum_{k=1}^{n-1} \sum_{i=1}^k y_i
    = f(\bm{y}).
\end{equation}
Moreover, if $\bm{x}\ne \bm{y}$ and $\bm{x}\prec \bm{y}$, then at least one inequality
in (\ref{eq:majorization_1}) is strict. Hence
\begin{equation}\label{eq:strict-Schur-final}
\bm{x}\prec \bm{y},\ \bm{x}\ne \bm{y} \quad\Longrightarrow\quad f(\bm{x}) < f(\bm{y}).
\end{equation}
Thus $f$ is strictly Schur-convex on $\Delta_n^\downarrow$.

\smallskip
\noindent\textbf{Supremum.}
Define
\begin{equation}\label{eq:Lambda-prime-final}
\Lambda' := \{\bm{x}\in\Delta_n^\downarrow:\ \bm{x}\prec \bm{\lambda}_{\sup}^\downarrow\}.
\end{equation}
By definition of $\bm{\lambda}_{\sup}^\downarrow$ as the supremum of $\Lambda$ in
the majorization order,
\begin{equation}\label{eq:Lambda-subset-final}
\Lambda \subseteq \Lambda' \subseteq \Delta_n^\downarrow.
\end{equation}
The standard characterization of majorization gives
\begin{equation}\label{eq:maj-ineq-final}
\bm{x}\prec\bm{\lambda}_{\sup}^\downarrow
\iff
\sum_{i=1}^k x_i \le \sum_{i=1}^k (\bm{\lambda}_{\sup}^\downarrow)_i,\quad
k=1,\dots,n-1,
\end{equation}
together with $\sum_{i=1}^n x_i=\sum_{i=1}^n (\bm{\lambda}_{\sup}^\downarrow)_i=1$.
Combining \eqref{eq:maj-ineq-final} with the defining inequalities of
$\Delta_n^\downarrow$, we see that $\Lambda'$ is the intersection of finitely
many closed half-spaces in $H$ and is bounded. Hence $\Lambda'$ is a convex
polytope.

For any $\bm{x}\in\Lambda'$ with
$\bm{x}\ne\bm{\lambda}_{\sup}^\downarrow$, we have
$\bm{x}\prec\bm{\lambda}_{\sup}^\downarrow$ and hence
$f(\bm{x})<f(\bm{\lambda}_{\sup}^\downarrow)$.
Thus $\bm{\lambda}_{\sup}^\downarrow$ is the unique maximizer of the linear function $f$ on
$\Lambda'$, which implies that $\bm{\lambda}_{\sup}^\downarrow$ is an extreme point of $\Lambda'$:
\begin{equation}\label{eq:vertex-final}
\bm{\lambda}_{\sup}^\downarrow \text{ is a vertex of }\Lambda'.
\end{equation}

Since $\Delta_n^\downarrow$ has nonempty interior in $H$ and $\Lambda$ has
nonempty interior relative to $\Delta_n^\downarrow$, the set $\Lambda$
has nonempty interior in $H$. Thus $\operatorname{aff}(\Lambda)=H$, and
\eqref{eq:Lambda-subset-final} implies
\begin{equation}\label{eq:dim-Lambda-prime-final}
\operatorname{aff}(\Lambda') = H.
\end{equation}
Here $\operatorname{aff}(\Lambda)$ denotes the affine hull of $\Lambda$. Therefore, $\Lambda'$ is full-dimensional in $H$ with
\begin{equation}\label{eq:dim-H-final}
\dim \Lambda' = \dim H = n-1.
\end{equation}

For a full-dimensional polytope $P\subseteq H$ and a vertex $\bm{v}\in P$, the
normal cone $N_P(\bm{v})$ is generated by the outward normals of the facets
containing $\bm{v}$. In a full-dimensional polytope, at least $\dim H$ facets
meet at each vertex with linearly independent normals, so
\begin{equation}\label{eq:vertex-normal-final}
\dim N_P(\bm{v}) = \dim H.
\end{equation}
Applying this to $P=\Lambda'$ and $\bm{v}=\bm{\lambda}_{\sup}^\downarrow$ with
\eqref{eq:dim-H-final} yields
\begin{equation}\label{eq:normal-Lambda-prime-final}
\dim N_{\Lambda'}(\bm{\lambda}_{\sup}^\downarrow) = n-1.
\end{equation}

If $C\subseteq D$ are closed convex sets in $H$ and $\bm{x}\in C$,
\eqref{eq:normal-cone-final} implies
\begin{equation}\label{eq:cone-inclusion-final}
N_D(\bm{x})\subseteq N_C(\bm{x}),
\end{equation}
since any vector supporting the larger set $D$ at $\bm{x}$ also supports $C$.
From \eqref{eq:Lambda-subset-final} and \eqref{eq:cone-inclusion-final},
\begin{equation}\label{eq:cone-inclusion-Lambda-final}
N_{\Lambda'}(\bm{\lambda}_{\sup}^\downarrow)
\subseteq N_{\Lambda}(\bm{\lambda}_{\sup}^\downarrow),
\end{equation}
whence
\begin{equation}\label{eq:normal-Lambda-dim-final}
\dim N_\Lambda(\bm{\lambda}_{\sup}^\downarrow)
\ge \dim N_{\Lambda'}(\bm{\lambda}_{\sup}^\downarrow)
= n-1.
\end{equation}
Since the normal cone is defined with respect to $H$,
\begin{equation}\label{eq:normal-Lambda-dim-upper-bound}
\dim N_\Lambda(\bm{\lambda}_{\sup}^\downarrow) \leq \dim L = \dim H = n-1.
\end{equation}
Combining \eqref{eq:normal-Lambda-dim-final} and \eqref{eq:normal-Lambda-dim-upper-bound}, we obtain
\begin{equation}
\dim N_\Lambda(\bm{\lambda}_{\sup}^\downarrow)
= n-1 \ge 2.    
\end{equation}
This establishes that $\bm{\lambda}_{\sup}^\downarrow$ is a singular boundary point of $\Lambda$.

\smallskip
\noindent\textbf{Infimum.}
The argument for $\bm{\lambda}_{\inf}^\downarrow$ is analogous.
Define
\begin{equation}\label{eq:Lambda-tilde-final}
\widetilde{\Lambda} := \{\bm{x}\in\Delta_n^\downarrow:\ \bm{\lambda}_{\inf}^\downarrow \prec \bm{x}\}.
\end{equation}
Then $\Lambda\subset\widetilde{\Lambda}\subset\Delta_n^\downarrow$. The
same reasoning shows that $\widetilde{\Lambda}$ is a full-dimensional
polytope in $H$.

Since $\bm{\lambda}_{\inf}^\downarrow \prec \bm{x}$ for all $\bm{x}\in\widetilde{\Lambda}$
and $f$ is strictly Schur-convex,
\begin{equation}\label{eq:inf-minimizer-final}
\bm{x}\in\widetilde{\Lambda},\ \bm{x}\ne\bm{\lambda}_{\inf}^\downarrow
\quad\Longrightarrow\quad
f(\bm{\lambda}_{\inf}^\downarrow) < f(\bm{x}).
\end{equation}
Hence $\bm{\lambda}_{\inf}^\downarrow$ is the unique minimizer of $f$ on
$\widetilde{\Lambda}$ and therefore a vertex of $\widetilde{\Lambda}$.
As above,
\begin{equation}\label{eq:normal-Lambda-tilde-final}
\dim N_{\widetilde{\Lambda}}(\bm{\lambda}_{\inf}^\downarrow)=n-1.
\end{equation}
Using $\Lambda\subset\widetilde{\Lambda}$ and
\eqref{eq:cone-inclusion-final},
\begin{equation}\label{eq:normal-Lambda-inf-final}
\dim N_\Lambda(\bm{\lambda}_{\inf}^\downarrow)
\ge \dim N_{\widetilde{\Lambda}}(\bm{\lambda}_{\inf}^\downarrow)
= n-1\ge 2.
\end{equation}
We also have
\begin{equation}\label{eq:normal-Lambda-inf-upperbound}
\dim N_\Lambda(\bm{\lambda}_{\inf}^\downarrow) \leq \dim L = \dim H = n-1.    
\end{equation}
Combining \eqref{eq:normal-Lambda-inf-final} and \eqref{eq:normal-Lambda-inf-upperbound}, we obtain
\begin{equation}
\dim N_\Lambda(\bm{\lambda}_{\inf}^\downarrow)
= n-1 \ge 2.    
\end{equation}
Hence $\bm{\lambda}_{\inf}^\downarrow$ is also a singular boundary point of $\Lambda$.
\end{proof}

\begin{corollary*}
Let $\Lambda \subseteq \Delta_n^\downarrow$ ($n\ge 3$) be convex and compact with nonempty interior. Denote by $\bm{\lambda}^{\downarrow}_{\sup}$ and $\bm{\lambda}^{\downarrow}_{\inf}$ the supremum and infimum of $\Lambda$ in the majorization order. If $\Lambda$ is smooth, then both $\bm{\lambda}^{\downarrow}_{\sup}$ and $\bm{\lambda}^{\downarrow}_{\inf}$ lie outside $\Lambda$.   
\end{corollary*}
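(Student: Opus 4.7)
The plan is to deduce the corollary directly from Propositions~\ref{thm:sup-inf-not-interior} and~\ref{thm:sup-inf-singular-final} by means of a location trichotomy. Relative to the affine hyperplane $H$, every element of $\Delta_n^\downarrow$ either lies outside $\Lambda$, in $\operatorname{int}(\Lambda)$, or on $\partial\Lambda$. My goal is to exclude the last two possibilities simultaneously for $\bm{\lambda}^{\downarrow}_{\sup}$ and $\bm{\lambda}^{\downarrow}_{\inf}$ under the smoothness hypothesis, leaving ``outside $\Lambda$'' as the only option.

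First, I would check that the hypotheses of Proposition~\ref{thm:sup-inf-not-interior} are satisfied: $\Lambda$ is compact with nonempty interior in $H$, and convexity is not even required for that proposition. Applying it directly rules out $\bm{\lambda}^{\downarrow}_{\sup},\bm{\lambda}^{\downarrow}_{\inf} \in \operatorname{int}(\Lambda)$. Next, I would assume for contradiction that either extremum lies on $\partial\Lambda$. The hypotheses of Proposition~\ref{thm:sup-inf-singular-final} (convex, compact, nonempty interior) are met, so the assumed extremum must be a singular boundary point, i.e., its outer normal cone in $L$ has dimension $n-1\ge 2$.

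Finally, I would invoke the smoothness hypothesis as defined in Sec.~\ref{sec:geometric_loc}: every boundary point of $\Lambda$ admits a unique supporting hyperplane. In the $(n-1)$-dimensional affine space $H$, a unique supporting hyperplane at $\bm{x}\in\partial\Lambda$ is equivalent to $\dim N_\Lambda(\bm{x})=1$. This directly contradicts the lower bound $\dim N_\Lambda(\bm{x})\ge 2$ forced by Proposition~\ref{thm:sup-inf-singular-final}, so the boundary case is also excluded. Combining both exclusions yields $\bm{\lambda}^{\downarrow}_{\sup},\bm{\lambda}^{\downarrow}_{\inf}\notin \operatorname{int}(\Lambda)\cup\partial\Lambda = \Lambda$, which is the desired conclusion.

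Overall this is a short dichotomy argument with no heavy lifting, since the two propositions do all the substantive work. The only conceptual step that deserves an explicit remark is the equivalence between ``unique supporting hyperplane'' and ``one-dimensional normal cone,'' which follows from standard convex analysis: the set of outer normals at $\bm{x}$ is, up to scaling, exactly the set of normals to supporting hyperplanes of $\Lambda$ at $\bm{x}$, so uniqueness of the hyperplane collapses the normal cone to a ray. I do not anticipate any genuine obstacle in formalizing this plan.
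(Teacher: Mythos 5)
Your proposal is correct and is exactly the intended derivation: the paper states the corollary without a separate proof precisely because it follows immediately by combining Proposition~\ref{thm:sup-inf-not-interior} (excluding the interior) with Proposition~\ref{thm:sup-inf-singular-final} plus the smoothness hypothesis (excluding the boundary, since a unique supporting hyperplane forces $\dim N_\Lambda(\bm{x})=1$, contradicting $\dim N_\Lambda(\bm{x})=n-1\ge 2$). Your explicit remark on the equivalence between a unique supporting hyperplane and a one-dimensional normal cone is the right detail to flag, and the argument is sound as written.
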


\end{document}